\newcommand{\beq}{\begin{eqnarray}}
\newcommand{\eeq}{\end{eqnarray}}
\newcommand{\la}{\langle}
\newcommand{\ra}{\rangle}
\newtheorem{theorem}{Theorem}[section]
\newtheorem{lemma}{Lemma}[section]
\newtheorem{cor}{Corollary}[section]
\newtheorem{definition}{Definition}[section]
\theoremstyle{remark}
\newtheorem*{remark}{Remark}
\begin{document}

\title{Complete hierarchies of SIR models on arbitrary networks with exact and approximate moment closure}

\author{Kieran J. Sharkey\textsuperscript{a}\footnote{Corresponding author (Email: kjs@liv.ac.uk; Tel: +44 (0)151 794 4023).} and Robert R. Wilkinson\textsuperscript{a}}
\date{}
\maketitle
\noindent (a) Department of Mathematical Sciences, University of Liverpool, Peach Street, Liverpool, L69 7ZL

\section*{Abstract}
We first generalise ideas discussed by Kiss et al. (2015) to prove a theorem for generating exact closures (here expressing joint probabilities in terms of their constituent marginal probabilities) for susceptible-infectious-removed (SIR) dynamics on arbitrary graphs (networks). For Poisson transmission and removal processes, this enables us to obtain a systematic reduction in the number of differential equations needed for an exact `moment closure' representation of the underlying stochastic model. We define `transmission blocks' as a possible extension of the block concept in graph theory and show that the order at which the exact moment closure representation is curtailed is the size of the largest transmission block. More generally, approximate closures of the hierarchy of moment equations for these dynamics are typically defined for the first and second order yielding mean-field and pairwise models respectively. It is frequently implied that, in principle, closed models can be written down at arbitrary order if only we had the time and patience to do this. However, for epidemic dynamics on networks, these higher-order models have not been defined explicitly. Here we unambiguously define hierarchies of approximate closed models that can utilise subsystem states of any order, and show how well-known models are special cases of these hierarchies.
\section{Introduction}
\label{1}
A primary method for incorporating spatial structure and other contact structures into epidemic models is to use a network of contacts \cite{Newman03}. While simulation of stochastic models is straightforward on these networks, obtaining differential equation descriptions of the relevant time series is more complex. Here we consider the construction of a hierarchy of moment equations which, in statistical physics, is sometimes known as the Bogoliubov-Born-Green-Kirkwood-Yvon (BBGKY) hierarchy after the names of its originators. The method was applied to population-level network-based epidemic and ecological models in the 1990s where truncation of the hierarchy was made at second order yielding pair-approximation models \cite{Matsuda, Sato, Harada, Keeling99}. Higher-order truncation of this hierarchy at the level of triples has also been investigated  \cite{Matsuda, Bauch, House}. With increasing computational resources it has also become numerically viable to consider these hierarchies in terms of individuals, rather than population-level quantities \cite{Sharkey08, Sharkey11, Simpson11}. A particularly important feature of the individual-level representation is that it enables us to establish exactness for finite populations in certain circumstances (see \cite{Sharkey13} and also \cite{Kiss}  and \cite{Wilkinson} by different methods).

Here we generalise ideas discussed by Kiss et al. \cite{Kiss} and also noted by Newman \cite{Newman10}, and apply them to arbitrary directed networks. We also observe that they apply to non-Markovian as well as Markovian SIR dynamics. Depending on the network, we find that for Markovian dynamics, exact closed models exist at all levels of the hierarchy of moment equations. The exact models and exact closures considered in \cite{Sharkey13} and \cite{Kiss} then represent special cases.

While the majority of moment closure models do not go beyond closure at the level of pairs (second order), it is frequently stated that, in principle, closed models at any order can be constructed. However, such higher-order models are rarely defined explicitly. Here, in the Markovian SIR epidemic context, we shall define hierarchies of closed models that can be constructed unambiguously at all orders by a systematic truncation method. In fact, we shall define and investigate several hierarchies of approximate models. All of these converge to exact representations at truncation orders which depend on the underlying network structure and all of them have either the pair-level model discussed in~\cite{Sharkey08} and~\cite{Sharkey11} or the variant of this model discussed in~\cite{Sharkey13} as the lowest (zeroth) order level of truncation.

The next section discusses the relevant background concepts upon which our work builds. Section~\ref{3} introduces the exact closure theorem which defines the conditions under which simplifications to the hierarchy of equations can be made for particular networks. Section~\ref{4} introduces approximate closures leading to complete hierarchies of approximate models.

\section{Background concepts}
\label{2}

Apart from Theorem~\ref{general_partitioning_theorem} which applies more generally, we shall consider a Markovian class of SIR models on contact networks. In particular, we consider a directed graph $D=(V,A)$ consisting of $N=|V|$ individuals/nodes and a set $A$ of arcs. We also label each individual according to some arbitrary ordering such that if $i \in V$ then $i \in \{1, 2, \ldots, N  \}$. Each individual can be in only one of three states $S$, $I$, or $R$ at any given time. Node $j \in V$, when infectious, makes `infectious contacts' to node $i \in V$ via a Poisson process of rate $T_{ij} \ge 0$, where $T_{ij}>0 \Leftrightarrow (j,i) \in A$ and where we assume that $T_{ii}=0$ for all $i\in V$. If node $i$ is susceptible when it receives an infectious contact then it immediately becomes infectious. It will then remain infectious for an exponentially distributed period, with parameter $\gamma_i$, after which it becomes recovered which is an absorbing state for the individual. We thus have a continuous-time Markov chain with a state space of size $3^N$. Except where otherwise stated, we also assume initial conditions such that the states of all nodes are initially statistically independent. This assumption encompasses all pure-state initial conditions, such as a specific individual being infectious with all others susceptible, and it also incorporates binomially distributed initial conditions. Uniform initial conditions can also be exactly represented with additional computation \cite{Wilkinson}.

\begin{definition}
$S_i$, $I_i$ and $R_i$ denote the indicator random variables for the events that node $i \in V$ is susceptible, infectious and removed respectively. Depending on the context, it will also be convenient to refer to $S_i$, $I_i$ and $R_i$ as the corresponding events themselves. 
\label{S_I_indicators}
\end{definition}

The hierarchy comprises of a sequence of equations containing the first moments and mixed moments of the random variables $S_i$ and $I_i$. Using angle brackets to denote expectation values, it can be shown \cite{Sharkey11} that the master equation (or Kolmogorov forward equations) implies the following rate equations:
\beq \nonumber
\dot{\la S_i \ra}&=&- \sum_{j=1}^{N}T_{ij}\la S_i I_j \ra, \\
\dot{\la I_i \ra}&=& \sum_{j=1}^{N}T_{ij}\la S_i I_j \ra - \gamma_i \la I_i \ra.
\label{singlet}
\eeq
where $S_iI_j$ is a product of the indicator random variables which also specifies a state of the subsystem of order two comprising of the pair of nodes $i$ and $j$. For this pair state we have:
\beq \nonumber 
\dot{\la S_i I_j \ra}&=& \sum_{k \neq i}^N T_{jk}\la S_i S_j I_k \ra - \sum_{k \neq j}^{N}T_{ik}\la S_i I_jI_k  \ra \\
&&-T_{ij} \la S_i I_j \ra - \gamma_j \la S_i I_j \ra
\label{doublet}.
\eeq

More generally, for these models, the master equation allows us to write down a rate equation for the probability of any subsystem state of size $n$ in terms of subsystem states of size $n$ and subsystem states of size $n+1$. We state this as a theorem below (Theorem~\ref{theorem1}).

Following prior work \cite{Sharkey13}, but with a notational simplification brought about by using the same index for all system and subsystem states, we define an alternative notation to Definition~\ref{S_I_indicators} that is useful for keeping track of the hierarchy of moment equations in this context. 

\begin{definition}
We use the following notation to denote subsystem states.
\begin{itemize}
\item $\psi_W$ is a subsystem comprising of the set of nodes $W \subset V$.
 \item Let $A$ be a mapping from the elements of $W$ to $\{S, I, R\}$, and let $A_i$ be the image of node $i \in W$ under $A$. Thus, $A$ can also be interpreted as a pure state for subsystem $\psi_W$, i.e. the state where, for all $i\in W$, individual $i$ is in state $A_i$. 
\item $\psi_W^A$ denotes the indicator random variable for the event that $\psi_W$ is in state $A$. Thus the probability of the event that subsystem $\psi_W$ is in state A is $P(\psi_W=A)=\la\psi_W^A\ra$. As in Definition~\ref{S_I_indicators}, it is also convenient to refer to $\psi_W^A$ as the event that $\psi_W$ is in state $A$. 
\end{itemize}
\label{indicator}
\end{definition}
\begin{remark}
For the event where node $i$ is in a susceptible state, we can draw the following correspondence between the notations: $\psi_i^S=S_i$, and similarly for the infectious and removed states.
\end{remark}
\begin{definition}
\label{h_def}
Let $k\in W\subseteq V$ and $X \in \{S,I,R\}$ and let $A$ be a state of subsystem $\psi_W$. Then, $h_k^X(\psi^A_W)$ denotes the indicator random variable or event $\psi^A_W$, but where the state of node $k$ is changed to state $X$ leaving the states of all other nodes unchanged. Note that if $A_k=X$ then $h_k^X(\psi^A_W)=\psi^A_W$.
\end{definition}

\begin{theorem}
For any subsystem $\psi_W$, the probability that it is in state $A$ is governed by the rate equation:
\begin{eqnarray}\nonumber
\dot{\la \psi^A_W\ra}&=&\sum_{k \in W} \mathbbm{1}(A_k= S) \Bigg[ - \sum_{n \in W} T_{kn} \mathbbm{1}(A_n= I)  \la \psi^A_W \ra - \sum_{n \in V \setminus W} T_{kn} \la  \psi^A_W I_n \ra  \Bigg] \\ \nonumber
&&+ \sum_{k \in W} \mathbbm{1}(A_k= I) \Bigg[  \sum_{n \in W} T_{kn} \mathbbm{1}(A_n= I) \la h_k^S( \psi^A_W) \ra - \gamma_k \la \psi^A_W \ra \Bigg]\\ \nonumber
&& +\sum_{k \in W} \mathbbm{1}(A_k= I) \Bigg[  \sum_{n \in V \setminus W} T_{kn} \la h_k^S ( \psi^A_W) I_n \ra
 \Bigg] \\ 
&& + \sum_{k \in W} \mathbbm{1}(A_k= R) \Bigg[  \gamma_k \la h_k^I (\psi^A_W) \ra \Bigg],
\label{14}
\end{eqnarray}
where here, and throughout this paper, the indicator $\mathbbm{1}(.)$ is equal to 1 if its argument is true and is equal to zero otherwise.
\label{theorem1}
\end{theorem}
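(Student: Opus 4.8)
The plan is to derive the rate equation directly from the continuous-time Markov chain (master equation) by analysing which transitions change the state of the subsystem $\psi_W$, and at what rates. Since $\psi^A_W$ is an indicator random variable, $\la \psi^A_W\ra = P(\psi_W = A)$, so it suffices to compute $\frac{d}{dt}P(\psi_W = A)$ by a standard inflow/outflow balance: the derivative equals the total rate of probability flowing into the event $\{\psi_W = A\}$ from other subsystem states, minus the total rate flowing out of it. The key observation is that the only elementary transitions in the full chain are (i) an infectious node $k$ infecting a susceptible neighbour, and (ii) an infectious node $k$ recovering; each such transition changes the state of exactly one node. Hence transitions affecting $\psi_W$ are indexed by a single node $k \in W$ together with its current state $A_k$.

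First I would fix $k \in W$ and split into cases according to $A_k$. If $A_k = S$: there is no inflow to $\{\psi_W = A\}$ via node $k$ (no transition produces an $S$ at $k$), but there is outflow whenever $k$ receives an infectious contact; the rate of such contacts, conditioned on $\psi_W = A$, is $\sum_{n} T_{kn} P(I_n \mid \psi_W = A)$, and multiplying by $P(\psi_W = A)$ gives $\sum_{n} T_{kn}\la \psi^A_W I_n\ra$. Splitting the sum over $n \in W$ (where $\mathbbm{1}(A_n=I)$ forces $I_n$ to be determined by $A$, collapsing $\la \psi^A_W I_n\ra$ to $\la \psi^A_W\ra$ when $A_n=I$ and to $0$ otherwise) and $n \in V\setminus W$ (where $\la\psi^A_W I_n\ra$ genuinely involves a node outside $W$) yields the first line of \eqref{14}. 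If $A_k = I$: outflow occurs both by recovery of $k$, at rate $\gamma_k$, contributing $-\gamma_k\la\psi^A_W\ra$; inflow occurs when the subsystem was in state $h_k^S(\psi^A_W)$ (node $k$ susceptible, all else as in $A$) and $k$ then gets infected, which happens at rate $\sum_n T_{kn}\mathbbm{1}(I_n)$ — decomposing this sum over $W$ and $V\setminus W$ exactly as before gives the $\la h_k^S(\psi^A_W)\ra$ and $\la h_k^S(\psi^A_W)I_n\ra$ terms on lines two and three. (One must check there is no additional outflow from $\{\psi_W=A\}$ via $k$ when $A_k=I$ other than recovery: indeed $k$ being infectious cannot change except by recovery.) If $A_k = R$: the only transition touching $k$ is the recovery event $I \to R$, which is pure inflow from $h_k^I(\psi^A_W)$ at rate $\gamma_k$, giving the last line; there is no outflow via $k$ since $R$ is absorbing.

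Summing the contributions over all $k \in W$ and invoking linearity of the derivative gives precisely \eqref{14}; since transitions change only one node, there is no double-counting and no cross terms between distinct $k$. I would also remark that the passage from conditional rates back to unconditional expectations — e.g. rewriting $P(I_n \mid \psi_W = A)\,P(\psi_W=A)$ as $\la \psi^A_W I_n\ra$, and recognising $h_k^S(\psi^A_W)$ and $h_k^I(\psi^A_W)$ as the correct "pre-transition" indicator variables — is exactly the bookkeeping that Definition~\ref{h_def} was set up to handle, so this step is more notational than substantive.

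The main obstacle, and the part requiring genuine care rather than calculation, is the case analysis for an infectious node $k \in W$: one must be certain that \emph{all} the new-infection inflow into $\{\psi_W = A\}$ is captured by contacts into $k$ from the single pre-image state $h_k^S(\psi^A_W)$, and in particular that the rate of those contacts is correctly expressed using the post-transition infectious set (the $\mathbbm{1}(A_n=I)$ for $n\in W$ and the $I_n$ for $n\in V\setminus W$ are evaluated in state $A$, which agrees with $h_k^S(\psi^A_W)$ on every node except $k$, and $k$ itself contributes nothing since $T_{kk}=0$ and a node cannot infect itself). Handling the $n \in W$ versus $n \in V \setminus W$ split consistently across the $A_k=S$ and $A_k=I$ cases — so that "internal" neighbours produce closed terms in moments of order $|W|$ while "external" neighbours produce the order-$|W|+1$ terms — is where a sign error or a missed term is most likely to creep in, and is worth writing out explicitly.
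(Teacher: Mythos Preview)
Your argument is correct. The paper itself does not prove this theorem; it simply states ``This theorem is proved in \cite{Sharkey13}'' and moves on. Your derivation --- indexing elementary transitions by the single node $k\in W$ whose state changes, splitting into the cases $A_k\in\{S,I,R\}$, and for each case computing the inflow and outflow rates with the internal/external decomposition of the infecting neighbour $n$ --- is exactly the standard master-equation bookkeeping one would expect, and is essentially the argument given in the cited reference. Your remark that the $\mathbbm{1}(A_n=I)$ factors for $n\in W$ are unchanged between $A$ and $h_k^S(\psi^A_W)$ because $T_{kk}=0$ is the only subtle point, and you have identified it correctly.
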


This theorem is proved in \cite{Sharkey13}. Starting with subsystem states that are only composed of susceptible or infectious individuals, repeated application of equation~\ref{14} to each of these states as well as to any subsystem states that arise on its right-hand side can never result in subsystem states with a removed individual. This is due to the absence of $h_k^R$ in equation \ref{14}. Hence, for these subsystem states, $\mathbbm{1}(A_k= R)=0 $ for all $k\in W$ so equation~\ref{14} becomes: 
\begin{eqnarray}\nonumber
\dot{\la \psi^A_W\ra}&=&\sum_{k \in W} \mathbbm{1}(A_k= S) \Bigg[ - \sum_{n \in W} T_{kn} \mathbbm{1}(A_n= I)  \la \psi^A_W \ra - \sum_{n \in V \setminus W} T_{kn} \la  \psi^A_W I_n \ra  \Bigg] \\ \nonumber
&&+ \sum_{k \in W} \mathbbm{1}(A_k= I) \Bigg[  \sum_{n \in W} T_{kn} \mathbbm{1}(A_n= I) \la h_k^S( \psi^A_W) \ra - \gamma_k \la \psi^A_W \ra \Bigg]\\ 
&& +\sum_{k \in W} \mathbbm{1}(A_k= I) \Bigg[  \sum_{n \in V \setminus W} T_{kn} \la h_k^S ( \psi^A_W) I_n \ra
 \Bigg].
\label{subsystem}
\end{eqnarray}
Equations~\ref{singlet} and~\ref{doublet} can now be seen to be special cases of this theorem. 

By applying equation \ref{subsystem} to every individual in the network for states $S$ and $I$ and then reapplying to every new subsystem state which emerges, we obtain a closed set of differential equations for a set $M$ of subsystem states. However, $|M|$ will generally be very large for most systems, preventing numerical solution. 

To reduce the number of equations, we need to introduce a mechanism to curtail the generation of new subsystem states. In the next section, we discuss scenarios in which this can be done where the emerging system is still an exact representation of the underlying stochastic process. Following this, we shall consider hierarchies of approximate closed models. 

\section{Exact closed models}
\label{3}
Here we prove a theorem pertaining to arbitrary SIR dynamics on arbitrary networks. We then use this to derive a class of exact models for Markovian SIR dynamics on arbitrary networks. We illustrate this with some examples, and finally state a theorem specifying the maximum size of subsystem needed to exactly represent the dynamics on any given network.
\subsection{Exact closure theorem}
\label{3.1}
For a given directed graph $D=(V,A)$ with set $V$ of nodes/individuals and set $A$ of arcs, we make the following definitions:

\begin{definition}
$\mbox{IN}(X)$ is the set of individuals that can reach at least one member of $X\subseteq V$ by following a directed path. Note that $X\subseteq \mbox{IN}(X)$.
\end{definition}
\begin{definition}
Let $X,Y,Z \subset V$ be disjoint and non-empty. The set of nodes $Z$ is `dynamically partitioning' with respect to $X$ and $Y$ if and only if we have $f_E (X,Y,Z)=1$ where:
\beq f_E (X,Y,Z)= \begin{cases}
1 & \mbox{if } \mbox{IN}(X)\cap \mbox{IN}(Y) = \emptyset \quad \mbox{\rm{(}in } D - Z)     \\
0 & \mbox{otherwise}
\end{cases}
\eeq
and $D-Z$ is the vertex-set deleted subgraph consisting of nodes $V \setminus Z$. Here, $E$ is chosen to represent `exact'; this is appropriate since we shall now see that $f_E (X,Y,Z)=1$ implies the existence of an exact closure relation.
\end{definition}
\begin{remark}
If the network is undirected then $f_E (X,Y,Z)=1$ if and only if there is no path between $X$ and $Y$ in $D-Z$. 
\end{remark}
\begin{theorem}
We consider stochastic SIR dynamics defined on a time-invariant network where the initial conditions are such that the states of individual nodes are initially statistically independent. Let $\psi^A_X, \psi^B_Y$ and $\psi^C_Z$ be indicator random variables or events where $X,Y,Z \subseteq V$ are disjoint and nonempty. If $Z$ is dynamically partitioning with respect to $X$ and $Y$, and all nodes in subsystem state $C$ are susceptible $(C_i=S\;\;\forall i\in Z)$, then provided that $\la \psi^C_Z \ra\neq 0$,
\beq
\la \psi^A_X\psi^B_Y\psi^C_Z \ra  =  \frac{ \la \psi^A_X\psi^C_Z \ra \la \psi^B_Y\psi^C_Z \ra}{  \la \psi^C_Z \ra}. 
\eeq
\label{general_partitioning_theorem}
\end{theorem}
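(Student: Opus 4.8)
The statement is a conditional--independence assertion: dividing by $\la\psi^C_Z\ra$ it says $P(\psi_X=A,\psi_Y=B\mid\psi_Z=C)=P(\psi_X=A\mid\psi_Z=C)\,P(\psi_Y=B\mid\psi_Z=C)$. Since in SIR dynamics the susceptible state is only ever left and never re-entered, the event $\psi^C_Z$ (all of $Z$ susceptible at time $t$, as $C_i=S$ for every $i\in Z$) coincides with the event $H$ that every node of $Z$ stays susceptible throughout $[0,t]$. My plan would be to realise the stochastic SIR process from a family of mutually independent ``local ingredients'': an initial state and an infectious--period variable for each node, and a transmission variable (or point process) for each arc, carried by that arc and its source node; the hypothesis that the nodes' initial states are independent is exactly what makes these ingredients mutually independent. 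From this representation one obtains, by an induction on successive event times, a Locality Lemma: for any vertex set $U$ closed under taking in-neighbours, the restriction to $U$ of the whole trajectory up to any given time is a deterministic function of the ingredients sitting on $U$.

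Next I would record a firewall observation: on $H$ no node of $Z$ is ever infectious, hence none ever transmits, so deleting $Z$ and its incident arcs does not change the trajectory of any node of $V\setminus Z$ up to time $t$; that is, on $H$ the process restricted to $V\setminus Z$ agrees with the SIR process run on $D-Z$. Applying the Locality Lemma inside $D-Z$, it follows that on $H$ the indicator $\psi^A_X$ equals a deterministic function $g$ of the ingredients on $P:=\mbox{IN}(X)$ computed in $D-Z$, and $\psi^B_Y$ equals a function $h$ of the ingredients on $R:=\mbox{IN}(Y)$ computed in $D-Z$. Dynamical partitioning is precisely the statement $P\cap R=\emptyset$, so $g$ and $h$ depend on disjoint sub-families of independent ingredients and are therefore independent; moreover $\psi^A_X\psi^B_Y\psi^C_Z=g\,h\,\mathbbm{1}_H$ as random variables, and likewise $\psi^A_X\psi^C_Z=g\,\mathbbm{1}_H$ and $\psi^B_Y\psi^C_Z=h\,\mathbbm{1}_H$.

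It then remains to show that conditioning on $H$ does not reintroduce dependence. One wants to argue that $\mathbbm{1}_H$ is a deterministic function of the ingredients carried by the nodes outside $Z$ from which infection could reach $Z$ along a path avoiding $Z$, together with the initial states and incoming-arc ingredients of $Z$, and then that the partitioning property forces this function to factor as a product of a factor built from the $P$-side ingredients, a factor built from the $R$-side ingredients, and a factor built from ingredients independent of both sides: were some node $q$ with a $Z$-avoiding path to $Z$ reachable, inside $D-Z$, both from a node of $\mbox{IN}(X)$ and from a node of $\mbox{IN}(Y)$ (these $\mbox{IN}$ sets taken in $D-Z$), the three pieces could be spliced into a walk joining $X$ to $Y$ within $D-Z$, contradicting that $Z$ dynamically partitions $X$ and $Y$. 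Inserting this factorisation of $\mathbbm{1}_H$, together with the independence of $g$ and $h$, into $\la\psi^A_X\psi^B_Y\psi^C_Z\ra=\la g\,h\,\mathbbm{1}_H\ra$ and into the corresponding expressions for $\la\psi^A_X\psi^C_Z\ra$, $\la\psi^B_Y\psi^C_Z\ra$ and $\la\psi^C_Z\ra$, and comparing products, would give $\la\psi^A_X\psi^B_Y\psi^C_Z\ra\,\la\psi^C_Z\ra=\la\psi^A_X\psi^C_Z\ra\,\la\psi^B_Y\psi^C_Z\ra$; dividing by $\la\psi^C_Z\ra\neq0$ finishes the argument.

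The hardest step, as indicated, will be the factorisation of $\mathbbm{1}_H$. Conditioning on $Z$ remaining susceptible carries information about everything that could have infected $Z$, and one must verify that this information never straddles the partition --- that no node capable of feeding into $Z$ within $D-Z$ is simultaneously reachable from $X$'s in-component and from $Y$'s in-component in $D-Z$ --- and it is here that the partitioning hypothesis must be used in full. A secondary but genuine technical point, needed for the claimed generality beyond the Markovian case, is to set up the local-ingredient representation, and hence the Locality Lemma, for arbitrary SIR transmission and removal mechanisms; for Poisson contact processes and exponential infectious periods this is immediate from memorylessness, but in general the transmission along each arc must be encoded as an ingredient local to that arc and its source node.
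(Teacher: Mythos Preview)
Your proposal is considerably more careful than the paper's own proof, which simply asserts in one line that, given $Z$ all susceptible, the states of $X$ and $Y$ are conditionally independent because no node reaches both $X$ and $Y$ without traversing $Z$. You correctly isolate the difficulty the paper glosses over: conditioning on $H$ carries information about every node that can feed into $Z$, and one must verify this information does not straddle the partition. Your framework (local ingredients, Locality Lemma, firewall) is exactly what would be needed to make the paper's heuristic precise, and for \emph{undirected} networks your factorisation of $\mathbbm{1}_H$ goes through: reachability is symmetric there, so a node reachable from $\mbox{IN}(X)$ in $D-Z$ already lies in $\mbox{IN}(X)$, the sets $P=\mbox{IN}(X)$, $R=\mbox{IN}(Y)$ and the remainder are genuinely disconnected in $D-Z$, and $\mathbbm{1}_H$ splits as a product of a $P$-side factor, an $R$-side factor and a neutral factor, which combined with $g\perp h$ yields the identity.

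For directed networks, however, your splicing argument has a directionality error that cannot be repaired. You wish to rule out a node $q$ with a $Z$-avoiding path to $Z$ that is reachable in $D-Z$ from some $p\in\mbox{IN}(X)$ and some $r\in\mbox{IN}(Y)$. But ``reachable from $p$'' means the path points \emph{into} $q$; the pieces $p\to\cdots\to q\leftarrow\cdots\leftarrow r$ meet at a collider and cannot be spliced into a directed walk producing any node of $\mbox{IN}(X)\cap\mbox{IN}(Y)$. Concretely, take arcs $p\to x$, $p\to q$, $r\to y$, $r\to q$, $q\to z$ with $X=\{x\}$, $Y=\{y\}$, $Z=\{z\}$: in $D-z$ one has $\mbox{IN}(x)=\{x,p\}$ disjoint from $\mbox{IN}(y)=\{y,r\}$, so $Z$ is dynamically partitioning, yet $q$ is reachable from both $p$ and $r$. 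Conditioning on $\{S_z\}$ biases $q$ towards non-infection, which couples $p$ and $r$ and hence $x$ and $y$; with constant infectious periods (so that transmissions along distinct out-arcs of a node are independent Bernoulli --- a legitimate non-Markovian SIR model) one checks directly that $\la R_xR_yS_z\ra\,\la S_z\ra\ne\la R_xS_z\ra\,\la R_yS_z\ra$ at large time. So the step you flagged as hardest is not merely hard: in the directed generality claimed, the factorisation of $\mathbbm{1}_H$ --- and with it the identity --- can fail. The paper's one-line argument does not confront this at all.
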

\begin{proof}
If the infection has not passed through $Z$ (which is guaranteed by all nodes in state $C$ being susceptible), the states of the individuals in $X$ are statistically independent of the states of the individuals in $Y$. This is true since $f_E(X,Y,Z)=1$ implies that there are no individuals from which both a member of $X$ and a member of $Y$ can be reached without traversing a member of $Z$. We have:
\beq
P(\psi^A_X, \psi^B_Y, \psi^C_Z|\psi^C_Z)=P(\psi^A_X, \psi^C_Z |\psi^C_Z) P(\psi^B_Y,\psi^C_Z|\psi^C_Z).
\nonumber
\eeq
Given that $P(\psi^C_Z)\neq 0$, we have:
\beq
\frac{P(\psi^A_X, \psi^B_Y, \psi^C_Z)}{P(\psi^C_Z)}=\frac{P(\psi^A_X, \psi^C_Z)}{P(\psi^C_Z)}\frac{P(\psi^B_Y,\psi^C_Z)}{P(\psi^C_Z)},
\nonumber
\eeq
from which the result follows.
\end{proof}
\begin{remark}
For the case of zero denominator, note that $P(\psi^C_Z)= 0$ implies that $P(\psi^A_X, \psi^B_Y, \psi^C_Z)=0$. 
\end{remark}
Notice that we made no assumptions about the SIR dynamics in proving this theorem and that it is therefore not restricted to Markovian systems, although it is the Markovian case that we shall be applying it to in the remainder of this paper.

The theorem is a generalisation of the main result in \cite{Kiss} which is stated in terms of single dynamically partitioning individuals on undirected networks. In that context they are referred to simply as partitioning individuals due to their correspondence  to graph partitioning. Some examples of where the exact closure theorem can be applied are shown in Figure~\ref{general_exact_closures}.
\begin{figure}
   \centerline{\includegraphics[width=.8\textwidth]{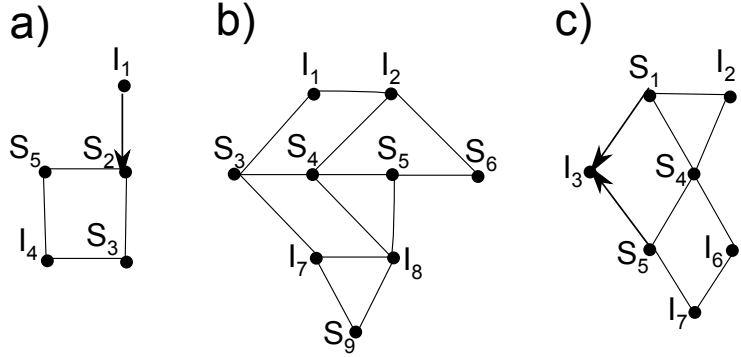}}
    \caption{Three examples of network states where the location of the susceptible nodes allows the application of the exact closure theorem. Here directed links have arrowheads and undirected links do not.}
    \label{general_exact_closures}
\end{figure}
In this Figure and throughout the remainder of the paper, network links without arrowheads denote undirected links whereas those with arrowheads denote directed links. Figure~\ref{general_exact_closures}a is typical of the dynamical partitioning we shall consider in this paper. Applying Theorem~\ref{general_partitioning_theorem}, we see that there is dynamical partitioning about node 2, so we have $\la I_1S_2S_3I_4S_5\ra=\la I_1S_2\ra\la S_2S_3I_4S_5\ra/\la S_2\ra$. For Figure~\ref{general_exact_closures}b we can dynamically partition about a cluster of susceptible nodes. In fact there are two exact closures we can write down: $\la I_1I_2S_3S_4S_5S_6I_7I_8S_9\ra=\la I_1I_2S_3S_4S_6\ra\la S_3S_4S_5S_6I_7I_8S_9\ra/\la S_3S_4S_6\ra=\la I_1I_2S_3S_4S_5S_6\ra\la S_3S_4S_5I_7I_8S_9\ra/\la S_3S_4S_5\ra$. In Figure~\ref{general_exact_closures}c we can apply the exact closure theorem to obtain $\la S_1I_2S_4S_5I_6I_7\ra=\la S_1I_2S_4\ra\la S_4S_5I_6I_7\ra/\la S_4\ra$. Note that $I_3$ is not included in this closure.

For our purposes, we are interested in a special case of the exact closure theorem which is captured by the following corollary.
\begin{cor}
For subsystem state $A$ of $\psi_W$, if $A_k=S$ where $k\in W$, and if $f_E(n,W \setminus k,k)=1$ where $n\in V\setminus W$, then
\beq
\la  \psi^A_W I_n \ra  =  \frac{  \la \psi^A_W \ra  \la  S_k I_n \ra}{  \la S_k \ra}. 
\label{eqclosure}
\eeq
\label{partitioning_theorem}
\end{cor}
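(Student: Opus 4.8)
The plan is to obtain the corollary as a direct specialisation of Theorem~\ref{general_partitioning_theorem}. First I would deal with the degenerate case $W=\{k\}$ separately: here $A_k=S$ forces $\psi^A_W=S_k$, so the claimed identity reduces to $\la S_kI_n\ra=\la S_k\ra\la S_kI_n\ra/\la S_k\ra$, which is immediate. From now on assume $W\setminus k\neq\emptyset$.

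Next I would check that Theorem~\ref{general_partitioning_theorem} applies with $X=\{n\}$ in state $I$, $Z=\{k\}$ in state $S$, and $Y=W\setminus k$ carrying the state $B$ obtained by restricting $A$ to $W\setminus k$. These three node sets are pairwise disjoint and nonempty: $Y$ and $Z$ are disjoint by construction and nonempty by the case split above; and since $n\in V\setminus W$ we have $n\neq k$ and $n\notin W\setminus k$, so $\{n\}$ is disjoint from both. All nodes of $Z$ are susceptible, as required. Crucially, the hypothesis $f_E(n,W\setminus k,k)=1$ is exactly the assertion that $Z=\{k\}$ is dynamically partitioning with respect to $X$ and $Y$. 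The ambient independence-of-initial-states assumption is inherited from the model, and $\la S_k\ra\neq 0$ plays the role of the nonzero-denominator condition, so every hypothesis of the theorem is in place.

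Invoking the theorem then gives
\[
\la I_n\,\psi^B_{W\setminus k}\,S_k\ra=\frac{\la I_n\,S_k\ra\,\la\psi^B_{W\setminus k}\,S_k\ra}{\la S_k\ra}.
\]
To finish I would rewrite each product of indicator random variables in the notation of Definition~\ref{indicator}: because a pure subsystem state is the product of its single-node indicators and $A_k=S$, we have $\psi^B_{W\setminus k}\,S_k=\psi^A_W$, and hence $I_n\,\psi^B_{W\setminus k}\,S_k=\psi^A_W I_n$; also $I_nS_k=S_kI_n$ trivially. Substituting these identifications into the displayed equation yields~(\ref{eqclosure}).

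I do not anticipate a genuine obstacle: the mathematical content is entirely contained in Theorem~\ref{general_partitioning_theorem}, and what remains is the bookkeeping of (i) verifying disjointness and nonemptiness of the three node sets, which is where the separate treatment of $W=\{k\}$ is needed, and (ii) matching the factored indicator product $\psi^B_{W\setminus k}\,S_k$ with $\psi^A_W$ via $A_k=S$. The one point worth a word is the zero-denominator convention: if $\la S_k\ra=0$ then, as in the remark following the theorem, both sides vanish and the identity still holds in the limiting sense, which is why one simply retains $\la S_k\ra\neq 0$ as a hypothesis.
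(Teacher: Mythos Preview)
Your proposal is correct and matches the paper's intended approach: the paper simply states the result as a corollary of Theorem~\ref{general_partitioning_theorem} without giving a separate proof, and your specialisation $X=\{n\}$, $Y=W\setminus k$, $Z=\{k\}$ is precisely the direct application that the word ``corollary'' signals. Your separate handling of the degenerate case $W=\{k\}$ (where $Y$ would be empty and the theorem would not apply) is a minor refinement beyond what the paper makes explicit, but in the paper's actual use of the corollary (equation~\ref{part_eq1}) one always has $|W|\ge 2$, so this edge case is moot in context.
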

This corollary is illustrated by the example in Figure~\ref{general_exact_closures}a. By applying this to equation~\ref{subsystem} we obtain:
\begin{eqnarray} \nonumber
\dot{\la \psi^A_W\ra}&=&\sum_{k \in W} \mathbbm{1}(A_k= S) \Bigg[ - \sum_{n \in W} T_{kn} \mathbbm{1}(A_n= I)  \la \psi^A_W \ra \Bigg] \\ \nonumber
&& + \sum_{k \in W} \mathbbm{1}(A_k= I) \Bigg[  \sum_{n \in W} T_{kn} \mathbbm{1}(A_n= I) \la h_k^S( \psi^A_W) \ra - \gamma_k \la \psi^A_W \ra \Bigg] \\ \nonumber
&&- \sum_{k \in W} \mathbbm{1}(A_k= S) \sum_{n \in V \setminus W}T_{kn} \Bigg[  \Big( 1-f_E(n,W \setminus k,k) \Big) \la h_k^S(\psi^A_W) I_n\ra \\ \nonumber
&& + f_E(n,W \setminus k,k)\frac{\la h^S_k(\psi^A_W)\ra \la S_k I_n\ra}{\la S_k\ra}  \Bigg] \\ \nonumber
&&+ \sum_{k \in W} \mathbbm{1}(A_k= I) \sum_{n \in V \setminus W}T_{kn} \Bigg[  \Big( 1-f_E(n,W \setminus k,k) \Big) \la h_k^S(\psi^A_W) I_n\ra \\ 
&& + f_E(n,W \setminus k,k)\frac{\la h^S_k(\psi^A_W)\ra \la S_k I_n\ra}{\la S_k\ra}  \Bigg].
\label{part_eq1}
\end{eqnarray}

For an arbitrary network, by applying equation~\ref{part_eq1} to the indicator random variables $S_i$ and $I_i$ for all $i\in \{1,2,...,N\}$,  and then reapplying it to every new subsystem state that emerges, a closed set of differential equations for the exact time-evolution of the probability of an individual being in a particular state is obtained for all individuals. The number of equations that will be needed is limited by the closures that are made possible by the exact closure theorem. 
\begin{definition}
For a given network, the induced set $M_E$ of subsystem states is obtained by applying equation~\ref{part_eq1} to every individual \rm (\it for states $S$ and $I$\rm )\it in the network, and then reapplying to every new subsystem state that emerges. $M_E$ is then the full set of subsystem states that emerge during this process. 
\end{definition}
\begin{remark}
It follows that $ S_i$ and $I_i$ $(\forall i \in V)$ and $S_i I_j$ $(\forall i,j \in V : T_{ij}>0)$ represent members of $M_E$ for any network.
\end{remark}

\subsection{Examples}
Before determining the network structures under which dynamical partitioning occurs more generally, we consider some examples. For further examples in the context of undirected networks the reader is directed to \cite{Kiss}.
\subsubsection{Example 1}
\begin{definition}
A network is a tree network if and only if its underlying graph (all directed edges are replaced by undirected edges) is a tree or forest.
\end{definition}
\begin{theorem}
For Markovian SIR dynamics on a tree network where the states of all individuals are initially statistically independent, the following equations hold exactly:
\begin{eqnarray}\nonumber
\dot{\la S_i\ra}&=&-\sum_{j=1}^N T_{ij}\la S_iI_j\ra, \\ \nonumber \dot{\la I_i\ra}&=&\sum_{j=1}^N
T_{ij}\la S_iI_j\ra -\gamma_i\la I_i\ra,\\ \nonumber
\dot{\la S_iI_j\ra}&=&\sum_{k=1,k\neq i}^N T_{jk}\frac{\la S_iS_j\ra\la S_jI_k\ra}{\la S_j\ra}-\sum_{k=1,k\neq j}^N T_{ik}\frac{\la S_iI_k\ra\la S_iI_j\ra}{\la S_i\ra},    \\ \nonumber
&&-T_{ij}\la S_iI_j\ra-\gamma_j\la S_iI_j\ra, \\ 
\dot{\la S_iS_j\ra}&=&-\sum_{k=1,k\neq j}^NT_{ik}\frac{\la S_iS_j\ra\la S_iI_k\ra}{\la
S_i\ra}-\sum_{k=1,k\neq i}^NT_{jk}\frac{\la S_iS_j\ra\la S_jI_k\ra}{\la S_j\ra}.
\label{pair_level}
\end{eqnarray}
\end{theorem}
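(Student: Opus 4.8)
The plan is to read the four families of equations off the moment hierarchy (Theorem~\ref{theorem1}) and then to use the tree structure to certify that every triple moment appearing on a right-hand side satisfies the dynamical-partitioning hypothesis of Corollary~\ref{partitioning_theorem}, so that it can be rewritten exactly as a ratio of pair and singlet moments.

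First I would specialise equation~\ref{subsystem}. For $W=\{i\}$ it yields the two singlet equations (equation~\ref{singlet}). For $W=\{i,j\}$ with $(A_i,A_j)=(S,I)$ it reproduces equation~\ref{doublet}, whose right-hand side contains the triples $\la S_iS_jI_k\ra$ and $\la S_iI_jI_k\ra$; for $(A_i,A_j)=(S,S)$ it gives $\dot{\la S_iS_j\ra}=-\sum_{k\notin\{i,j\}}(T_{ik}+T_{jk})\la S_iS_jI_k\ra$, whose right-hand side contains only $\la S_iS_jI_k\ra$. Thus the content of the theorem is precisely that, on a tree, these triples close at the pair level. It suffices to treat pairs with $i$ and $j$ adjacent in the underlying graph: $S_iI_j\in M_E$ forces $T_{ij}>0$, closing $\la S_iS_jI_k\ra$ inside the equation for $\la S_iI_j\ra$ produces $\la S_iS_j\ra$ with $i,j$ adjacent, and one checks that the resulting equations couple an adjacent pair only to adjacent pairs and singlets, so the system generated from the $S_i,I_i$ is contained in the set of singlets together with the pairs $S_iI_j$, $S_iS_j$ for adjacent $i,j$. (For non-adjacent $i,j$ the pair equations are not of this closed form, but those pairs never enter the hierarchy generated here.)

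The core step is the graph-theoretic claim. Every triple above has the form $\la\psi^A_{\{\ell,m\}}I_k\ra$ where $\{\ell,m\}$ is an edge, $A_m=S$, $k\notin\{\ell,m\}$, and $k$ is adjacent to $m$: in the $T_{jk}$-term of equation~\ref{doublet} take $(m,\ell)=(j,i)$, in the $T_{ik}$-term take $(m,\ell)=(i,j)$, and likewise for the two terms of the equation for $\la S_iS_j\ra$. Since $\{\ell,m\}$ and $\{k,m\}$ are edges of a forest and $k\ne\ell$, the unique path from $k$ to $\ell$ in the underlying forest passes through $m$ (if $k$ and $\ell$ lie in different components of the forest this is immediate), so deleting $m$ separates $k$ from $\ell$. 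Hence in $D-m$ no vertex can reach both $k$ and $\ell$ by directed paths, since reversing one such path and concatenating with the other would give a walk in the underlying graph from $k$ to $\ell$ avoiding $m$; that is, $f_E(k,\{\ell\},\{m\})=1$. Corollary~\ref{partitioning_theorem} then gives $\la\psi^A_{\{\ell,m\}}I_k\ra=\la\psi^A_{\{\ell,m\}}\ra\,\la S_mI_k\ra/\la S_m\ra$ (equivalently, one may specialise equation~\ref{part_eq1}), and substituting this into equation~\ref{doublet} and the equation for $\la S_iS_j\ra$ produces exactly the displayed system, which is closed because its right-hand sides contain only singlets and adjacent pairs.

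I do not expect a genuine obstacle: the argument is bookkeeping together with uniqueness of paths in a forest. The two points needing care are (i) matching each triple with the correct partitioning vertex $m$ — the rule being that $m$ is the node of the pair that is simultaneously susceptible and adjacent to the external infectious node — and the corresponding instance of $f_E$ in the corollary; and (ii) the zero-denominator case: if $\la S_m\ra=0$ then $S_m\equiv0$ almost surely, so both $\la\psi^A_{\{\ell,m\}}I_k\ra$ and $\la\psi^A_{\{\ell,m\}}\ra$ vanish and the offending term is read as $0$, consistently with the remark following Theorem~\ref{general_partitioning_theorem}.
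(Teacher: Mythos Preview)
Your proposal is correct and follows essentially the same approach as the paper: the paper's proof simply asserts that on a tree every node is dynamically partitioning relative to any two of its neighbours on the underlying graph and then invokes repeated application of equation~\ref{part_eq1}. You unpack this more carefully --- identifying the correct partitioning vertex $m$ for each triple, giving the elementary forest argument that $m$ separates $k$ from $\ell$, noting that only adjacent pairs enter the induced system, and treating the zero-denominator case --- but the underlying logic is identical.
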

\begin{proof}
For such tree networks, every individual is dynamically partitioning relative to any two of its neighbours on the underlying graph. Hence, the above system follows directly from repeated application of equation~\ref{part_eq1}, starting with $\la S_i\ra$, $\la I_i\ra$ $\forall i \in V$.
\end{proof}
\begin{remark}
This is the pairwise model that was shown to be exact for tree networks in \cite{Sharkey13}.
\end{remark}

\subsubsection{Example 2}
 \label{3.2}
Consider the graph in Figure~\ref{partitioning_examples}a. 
\begin{figure}
   \centerline{\includegraphics[width=.8\textwidth]{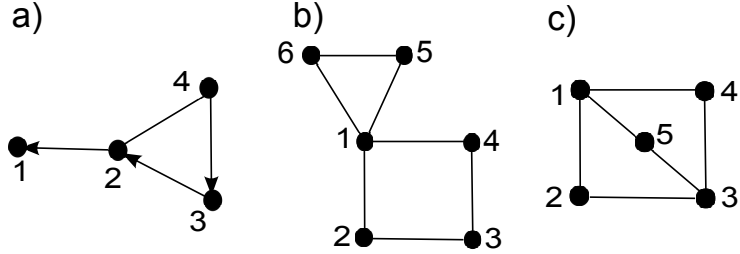}}
    \caption{Some example graphs. For dynamics on these graphs, we assume a generic removal rate $g$ and a transmission rate of 1 across all links.}
    \label{partitioning_examples}
\end{figure}
Let us suppose that all nodes have the same removal rate $g$ and that the transmission rate across all network links is unity. For simplicity we shall also make this assumption through the remainder of the explicit examples in this paper. We can apply Corollary~\ref{partitioning_theorem} which is embedded in equation~\ref{part_eq1} to build up the induced subsystem states $M_E$. Let us just consider the infectious probability of node 1 to see how this works. We have:
\beq\nonumber
\dot{\la I_1\ra}=\la S_1I_2\ra-g\la I_1\ra.
\eeq
Here and throughout the paper we order nodes according to the numerical order of their labels; the relevant motif structures need to be understood with reference to the associated graph. Now, node 2 is dynamically partitioning with respect to nodes 1 and 3, and it is also dynamically partitioning with respect to nodes 1 and 4. Hence:
\beq \nonumber
\dot{\la S_1I_2\ra}&=&\la S_1S_2I_3\ra+\la S_1S_2I_4\ra-\la S_1I_2\ra-g\la S_1I_2\ra \\ \nonumber
&=&\frac{\la S_1S_2\ra\la S_2I_3\ra}{\la S_2\ra}+\frac{\la S_1S_2\ra\la S_2I_4\ra}{\la S_2\ra}-(1+g)\la S_1I_2\ra.
\eeq
Rather than a complete analysis of all induced subsystem states that arise, we take the single pair state $S_2I_3$ from this equation as an example. Here, node 3 is not dynamically partitioning with respect to nodes 2 and 4 but node 2 is dynamically partitioning with respect to 1 and 3 so:
\beq\nonumber
\dot{\la S_2I_3\ra}=\la S_2S_3I_4\ra-\frac{\la I_1S_2\ra\la S_2I_3\ra}{\la S_2\ra}-\la I_4S_2I_3\ra-(1+g)\la S_2I_3\ra.
\eeq
Then for $\la S_2S_3I_4\ra$, node 2 is dynamically partitioning with respect to node 1 and nodes 3 and 4 so:
\beq\nonumber
\dot{\la S_2S_3I_4\ra }=-\frac{\la I_1S_2\ra\la S_2S_3I_4\ra}{\la S_2\ra}-(2+g)\la S_2S_3I_4\ra.
\eeq
We see that here, $M_E$ represents a significant dimensional reduction in the number of induced subsystem states compared to the full set of induced subsystem states $M$.
\subsubsection{Example 3}
For the undirected graph in Figure~\ref{partitioning_examples}b there is dynamical partitioning about node 1. Starting with (for example) the infectious probability for node 1, we have:
\beq\nonumber
\dot{\la I_1\ra}=\la S_1I_2\ra+\la S_1I_4\ra+\la S_1I_5\ra + \la S_1I_6\ra-g\la I_1\ra,
\eeq
where again we are assuming transmission rates of unity and a removal rate $g$ for each node. Now, choosing the first of these pairs to develop one part of the induced set $M_E$ gives:
\beq \nonumber
\dot{\la S_1I_2\ra}&=&\la S_1S_2I_3\ra-\la S_1I_2I_4\ra-\frac{\la S_1I_2\ra\la S_1I_5\ra}{\la S_1\ra}-\frac{\la S_1I_2\ra\la S_1I_6\ra}{\la S_1\ra}\\
&&-(1+g)\la S_1I_2\ra,
\label{squarepairs}
\eeq
and then for the first of these triples:
\begin{eqnarray}\nonumber
\dot{\la S_1S_2I_3\ra}&=&\la S_1S_2S_3I_4\ra-\la S_1S_2I_3I_4\ra-\frac{\la S_1S_2I_3\ra\la S_1I_5\ra}{\la S_1\ra}-\frac{\la S_1S_2I_3\ra\la S_1I_6\ra}{\la S_1\ra} \\ \nonumber 
&&-(1+g)\la S_1S_2I_3\ra.
\label{eqp1}
\end{eqnarray}
For the first of these quads we have:
\beq\nonumber
\dot{\la S_1S_2S_3I_4\ra}&=&-\frac{\la S_1S_2S_3I_4\ra\la S_1I_5\ra}{\la S_1\ra}-\frac{\la S_1S_2S_3I_4\ra\la S_1I_6\ra}{\la S_1\ra} \\ \nonumber
&& -(2+g)\la S_1S_2S_3I_4\ra.
\eeq
Here, the maximum size of a subsystem state is four. We note that this is equal to the size of the largest simple cycle and that this was also true for example 2. However, this is not always the case as shown by the next example. 
\subsubsection{Example 4}

Figure~\ref{partitioning_examples}c shows a network where the maximum simple cycle size is 4 but the maximum size of a subsystem state in $M_E$ is 5. Starting with the infectious probability of node 1 we have:
\beq\nonumber
\dot{\la I_1\ra}=\la S_1I_2\ra+\la S_1I_4\ra + \la S_1I_5\ra -g\la I_1\ra.
\eeq
Then, taking just the subsystem state in the first term:
\beq\nonumber
\dot{\la S_1I_2\ra}=\la S_1S_2I_3\ra-\la S_1I_2I_4\ra-\la S_1I_2I_5\ra-(1+g)\la S_1I_2\ra,
\eeq
and again taking just the first term:
\beq \nonumber
\dot{\la S_1S_2I_3\ra}&=&\la S_1S_2S_3I_4\ra+\la S_1S_2S_3I_5\ra -\la S_1S_2I_3I_4\ra-\la S_1S_2I_3I_5\ra \\
&&-(1+g)\la S_1S_2I_3\ra.
\label{eqp2}
\eeq
Finally, taking the first term again gives:
\beq
\dot{\la S_1S_2S_3I_4\ra}=-2\la S_1S_2S_3I_4I_5\ra-(2+g)\la S_1S_2S_3I_4\ra .
\label{eqp3}
\eeq
In this case we see that the maximum size of a subsystem state is at the size of the system (5 nodes) and is not constrained by the largest simple cycle (4 nodes). This leads to the question: What aspect of a network specifies the largest subsystem size that appears in $M_E$? We answer this question in the following subsection.

\subsection{System size}
 \label{3.3}
Here we define the type of network structures that are amenable to dynamical partitioning. We start from single node subsystems and expand out, via equation~\ref{part_eq1}, until the largest subsystem is reached incorporating that individual before dynamical partitioning prevents larger subsystems emerging. For the undirected case, the situation simplifies considerably \cite{Kiss} since all dynamically partitioning individuals are also cut-vertices (individuals which, when removed, increase the number of connected components). It is then helpful to represent the network as a collection of blocks (maximal biconnected subgraphs) where the between-block structure is tree-like (see Figure~\ref{undirected}a). This makes it straightforward to assess the feasibility of constructing a solvable exact system by making use of dynamical partitioning.  Notice that it is possible for a node to belong to more than one block as in the top right of Figure~\ref{undirected}a although the overlap between any two blocks can only be a single node.
\begin{figure}
  \centerline{\includegraphics[width=.8\textwidth]{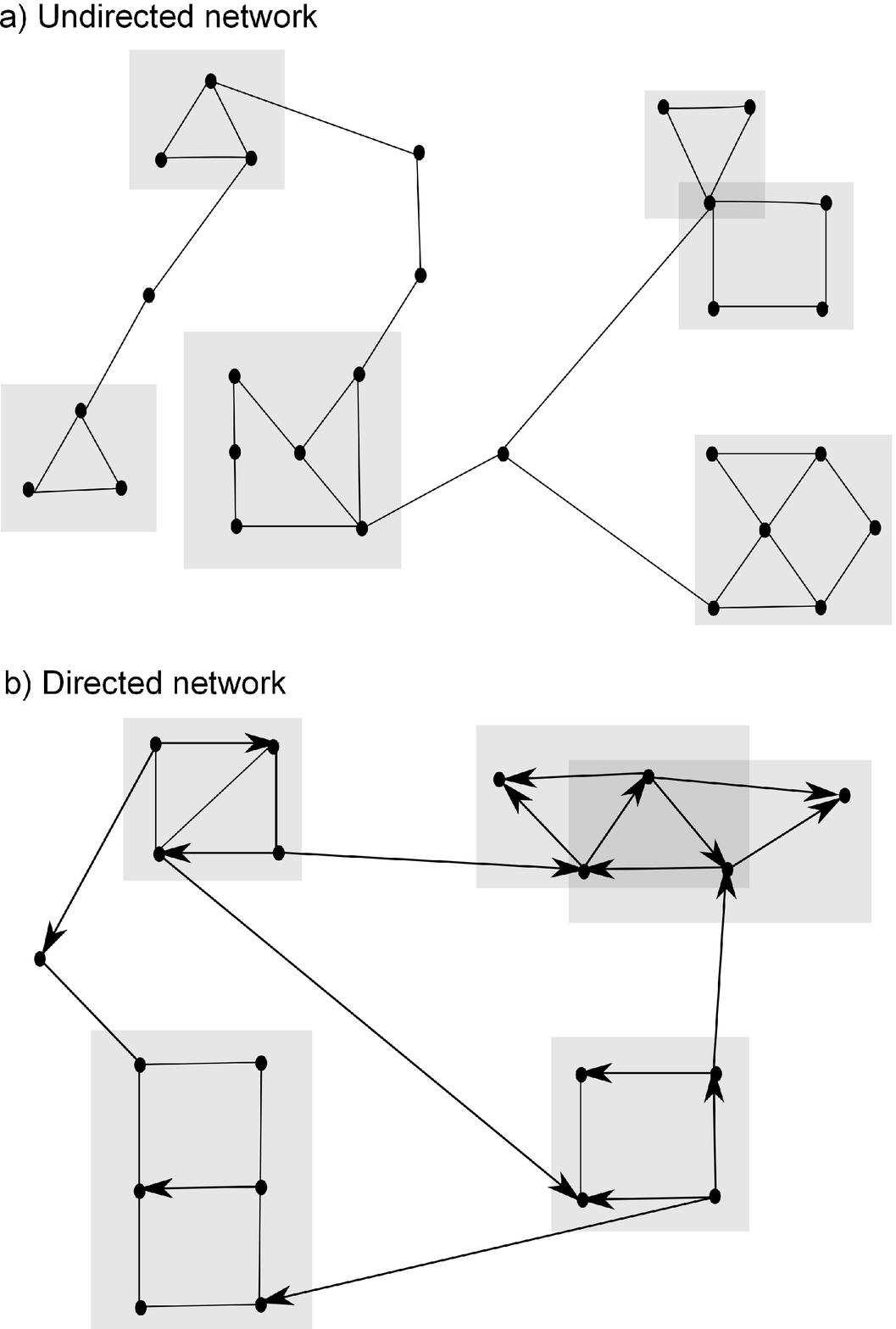}}
    \caption{Examples of networks that decompose into transmission blocks. The transmission blocks are highlighted by the shaded rectangles. Darker areas are where two transmission blocks overlap. a) An undirected network where the effectiveness of dynamical partitioning is made clear by the number of distinct biconnected subgraphs which resemble structured households. b) A directed network where identifying the transmission blocks is more complicated.}
    \label{undirected}
\end{figure}
It is interesting that this representation of the network resembles the household model structure where analytic progress can also be made \cite{Ball}. For directed networks, the situation is more complicated. Here we define `transmission blocks' to play a similar role to blocks. Indeed, blocks and transmission blocks will have equivalent definitions in the undirected case. We use the term transmission block rather than block since there are likely to be other useful extensions of the block concept for directed networks.

\begin{definition}
Let $D=(V,A)$ be a directed graph with set $V$ of nodes and set $A$ of arcs. Let $W\subseteq V$. Then $D[W]$ is the subgraph formed from the nodes of $W$ and arcs with endpoints both in $W$.
\end{definition}
\begin{definition}
The subgraph $D[W]$ is a `directed sub-block' if and only if there is at least one node reachable from all others in $D[W]$ and its underlying graph is biconnected. 
\end{definition}
\begin{remark}
According to this definition, any block in an undirected network is also a directed sub-block. Hence, the blocks illustrated in Figure~\ref{undirected}a are all directed sub-blocks.
\end{remark}
\begin{definition}
We will refer to a directed sub-block $D[W]$ as a `transmission block' if and only if there does not exist $U \supset W$ such that $D[U]$ is also a directed sub-block. 
\end{definition}

The shaded boxes in Figure~\ref{undirected} are examples of transmission blocks. Figure~\ref{undirected}b gives an example of these on a directed graph. Notice that now it is possible for transmission blocks to overlap by more than one node (the darker shaded triangle belongs to two transmission blocks). This happens when a region of the network has paths to two or more other regions that do not have paths between each other. Figure~\ref{sub-block} shows some more examples of these definitions for directed networks. Figure~\ref{sub-block}a and~b have underlying graphs that are biconnected. Figure~\ref{sub-block}{b} also has a node (node 1) which is reachable from all others and so it is a directed sub-block whereas Figure~\ref{sub-block}a is not. Figure~\ref{sub-block}b is also a transmission block since it is maximal. Additionally, neither have sub-graphs of the underlying graphs that are biconnected and so neither contain directed sub-blocks as subgraphs. Figure~\ref{sub-block}c is a transmission block (the underlying graph is biconnected and node 2 is reachable from all others). It also contains several directed sub-blocks (for example nodes 1,2 and 3). Figure~\ref{sub-block}d contains a transmission block as a subgraph (nodes 1,2,3,4) and contains several directed sub-blocks.
\begin{figure}
   \centerline{\includegraphics[width=.5\textwidth]{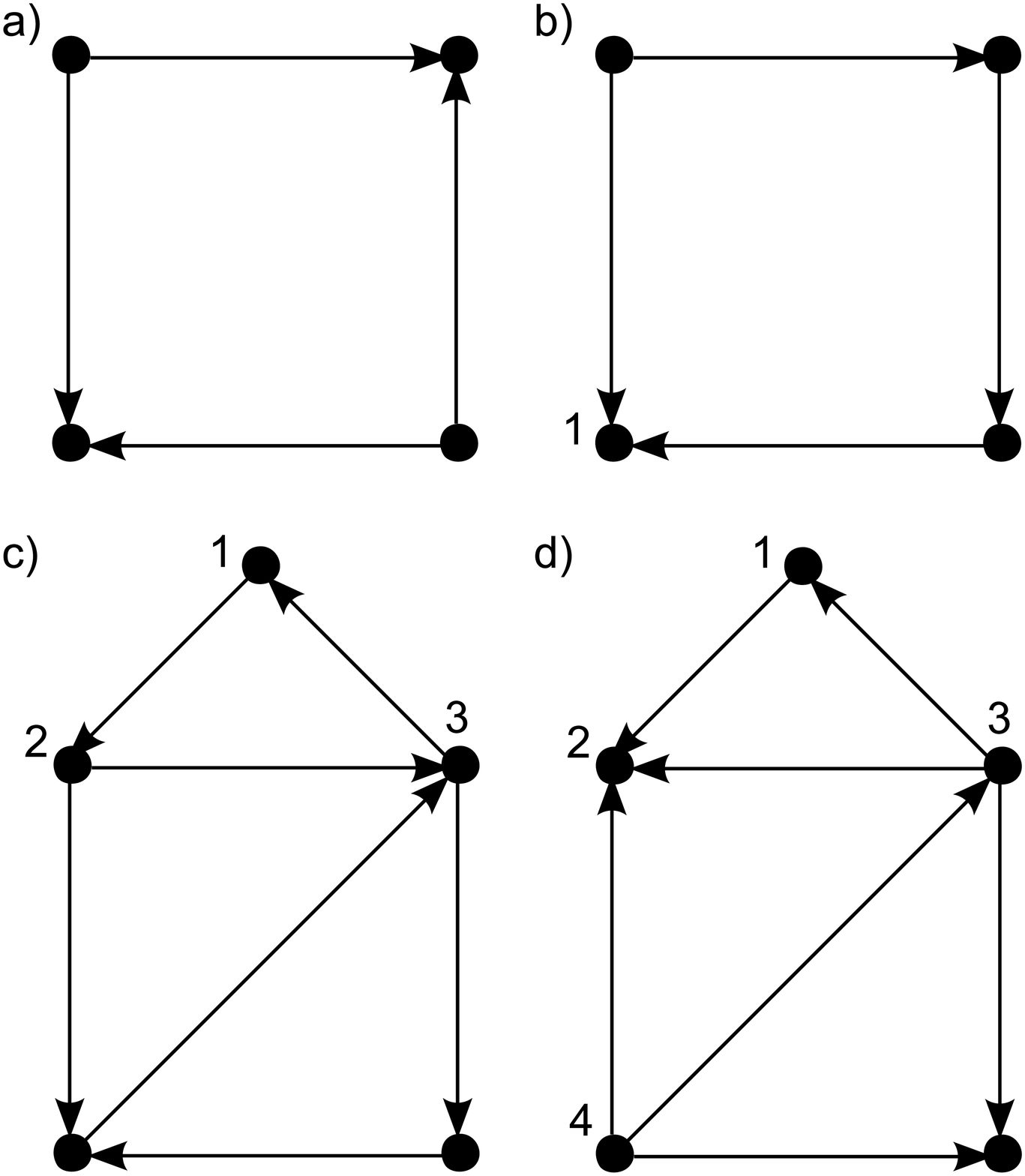}}
  \caption{Four directed graphs. Graph a) is not a transmission block whereas graphs b) and c) are transmission blocks. Graph d) contains a transmission block as a subgraph.}
 \label{sub-block}
\end{figure}

We can now state the main result on subsystem size:
\begin{theorem}
The largest subsystem state in $M_E$ consists of the same number of individuals as the largest transmission block, or it contains 2 individuals if there are no transmission blocks.
\label{theoremA0}
\end{theorem}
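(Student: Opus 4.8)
The plan is to establish the two halves of the equality separately: (a) every subsystem state in $M_E$ has support contained in (the node set of) some transmission block, so that no support is larger than the largest transmission block; and (b) the largest transmission block itself occurs as the support of a state in $M_E$.

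For (a) I would induct along the generation of $M_E$. A support $W$ enters $M_E$ either as a singleton or pair, or by one application of equation~\ref{part_eq1} to a state of support $W_0$: a new node $n\in V\setminus W_0$ is adjoined at a node $k\in W_0$ exactly when $T_{kn}>0$ (an arc $(n,k)$) and $f_E(n,W_0\setminus k,k)=0$; the latter hands us a node $m\neq k$ together with a directed path $P$ from $m$ to $n$ and a directed path $Q$ from $m$ to some $w\in W_0\setminus k$, both inside $D-k$. The inductive claim is that each support in $M_E$ has at most two nodes or is contained in a directed sub-block. In the inductive step, suppose $W_0$ lies in a directed sub-block $D[B]$ (when $|W_0|=2$ one simply takes $B=W_0$, with $r$ an arc-endpoint of $D[W_0]$, and the construction below still applies), and fix a node $r$ of $B$ reachable from all of $B$ in $D[B]$. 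If $n\in B$ we are done. Otherwise set $U_0=B\cup V(P)\cup V(Q)$, let $H$ be the underlying graph of $D[U_0]$, let $\mathcal{B}$ be the block of $H$ containing the biconnected graph underlying $D[B]$, and put $U=V(\mathcal{B})$. One then checks: (i) $n\in U$, because no single vertex of $H$ separates $n$ from $w$ --- the route $n,k,\dots,w$ (the arc $(n,k)$ followed by a $k$--$w$ path inside the biconnected $D[B]$) and the route $n,\dots,m,\dots,w$ (running $P$ backwards into $m$, then $Q$ forwards to $w$, avoiding $k$) cannot both be cut by one vertex; (ii) the underlying graph of $D[U]$ is biconnected, as it spanningly contains the block $\mathcal{B}$; and (iii) $r$ is reachable from every node of $U$ within $D[U]$, since for $v\in U_0$ there is a directed path to $r$ in $D[U_0]$ (along $P$, then the arc $(n,k)$, then inside $D[B]$; or along $Q$, then inside $D[B]$; or wholly inside $D[B]$), and such a path, being a path of $H$ joining two vertices of the block $\mathcal{B}$, never leaves $\mathcal{B}$. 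Hence $D[U]$ is a directed sub-block containing $B\cup\{n\}\supseteq W_0\cup\{n\}$. As every directed sub-block sits inside a transmission block, (a) follows; and when there are no transmission blocks this forces every support to have at most two nodes, while $M_E$ does contain the pairs $S_iI_j$ whenever $D$ has an arc.

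For (b) I would build, for the largest transmission block $D[B^\ast]$ with a node $r$ reachable from all of $B^\ast$ in $D[B^\ast]$, a chain of states in $M_E$ with supports $W_2\subsetneq W_3\subsetneq\cdots\subsetneq W_{|B^\ast|}=B^\ast$: $W_2$ is an edge incident to $r$ (a pair in $M_E$), and $W_{t+1}$ comes from $W_t$ by one application of equation~\ref{part_eq1}. It is enough to track supports, since both the $A_k=S$ and the $A_k=I$ branches of equation~\ref{part_eq1} place the support $W_t\cup\{n\}$ into $M_E$ as soon as $f_E(n,W_t\setminus k,k)=0$. Keeping $r$ in every $W_t$, the hypothesis that $r$ is reachable from all of $B^\ast$ produces, at each stage, an arc $(n,k)$ with $n\in B^\ast\setminus W_t$ and $k\in W_t$; if $n$ has two out-neighbours $k_1,k_2$ in $W_t$ we may adjoin $n$ at $k_1$, because $m=n$ and $w=k_2$ then witness $f_E(n,W_t\setminus k_1,k_1)=0$, and otherwise we must choose $n$ (and, if needed, route back into $W_t$ along an ear of the underlying graph of $D[B^\ast]$) so that $n$ reaches a node of $W_t\setminus k$ without meeting $k$. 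The natural device is an open ear decomposition of the biconnected graph underlying $D[B^\ast]$, processing one ear at a time and ordering the insertion of its internal vertices compatibly with the arc directions forced by reachability towards $r$.

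I expect (b) to be the hard part: showing that the dynamical-partitioning condition $f_E(n,W_t\setminus k,k)=0$ can always be arranged along some legal building order. Undirected biconnectivity of $D[B^\ast]$ only tells us that deleting $k$ keeps the underlying graph connected, whereas what is needed is a directed route from $n$ back to $W_t\setminus k$ avoiding $k$; marrying this to the requirement that each newly inserted node carry an arc into the already-built set is exactly where the ear decomposition and the ``$r$ reachable from all'' property must be used in tandem. The block-decomposition step of (a) is more mechanical, but still needs the care recorded in (iii) that the directed paths certifying the sink property do not stray outside the chosen block.
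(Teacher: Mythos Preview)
Your two-part strategy mirrors the paper's exactly: part (a) is Corollary~\ref{theoremA2} (every support of size greater than $2$ lies inside a directed sub-block, hence inside a transmission block), and part (b) is Lemma~\ref{theoremA3} (every transmission block occurs as a support).

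For (a), both you and the paper induct along the generation of $M_E$. The paper carries a slightly stronger invariant---that some fixed node $i\in W$ is reachable from all others in both $D[W]$ and in the containing directed sub-block $D[X]$---and in the inductive step simply appends the witnessing paths (its $P_1,P_2$ or $P_3$) to $X$, checking directly that the result is again a directed sub-block. Your block-decomposition device, taking the block of the underlying graph of $B\cup V(P)\cup V(Q)$ that contains $B$, is a clean alternative that handles the biconnectivity check automatically; the argument you sketch for (i)--(iii) is sound, the key fact in (iii) (that a simple path between two vertices of a block never leaves that block) being standard.

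For (b), your proposal is genuinely incomplete, as you acknowledge. The paper avoids ear-decomposition bookkeeping entirely and proves the one-step extension (Lemma~\ref{propA3}) by contradiction: maintain the invariant that $r\in W_t$ is reachable from all of $W_t$ in $D[W_t]$ (which, as you already observe, the building rule preserves); assume no arc $(n,k)$ with $n\in B^\ast\setminus W_t$, $k\in W_t$ and $f_E(n,W_t\setminus k,k)=0$ exists; and then argue, by examining how any $n\in B^\ast\setminus W_t$ can attach to $W_t$, that the underlying graph of $D[B^\ast]$ must either fail biconnectivity or contain a node with no directed path to $r$---either way contradicting that $D[B^\ast]$ is a directed sub-block. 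This sidesteps precisely the difficulty you flag of synchronising arc directions with an undirected ear ordering, and I would replace your ear-decomposition plan with this contradiction argument.
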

\begin{proof}
The Theorem follows from Corollary~\ref{theoremA2} and Lemma~\ref{theoremA3} (see Appendix A). From Corollary~\ref{theoremA2}, the individuals contained in a subsystem state larger than a pair appearing in $M_E$ belong to some transmission block. From Lemma~\ref{theoremA3}, any transmission block appears as a subsystem state in $M_E$.   
\end{proof}

\section{Hierarchies of approximate models}
\label{4}
The systems of equations in the previous section are exact, but limited in applicability because of the limited scope for dynamical partitioning in most networks. To suitably curtail the large number of equations, the networks need to have a structure which is roughly tree-like.

More typically, we want to trade off some exactness for models which are numerically tractable and provide a good, rather than exact description of the underlying dynamics. The pair-level SIR model (equation~\ref{pair_level}) is exact for tree networks but is also a reasonably good approximation for SIR dynamics on a wide range of networks. Higher-order models will typically be more accurate, but will have considerably greater computational cost. Here we formally define hierarchies of approximate models that can be applied to Markovian SIR dynamics on any network.

We define `pseudo-partitioning' according to different criteria. We define two hierarchies of models via what we term `cycle-partitioning' and `size-partitioning'. We then also consider a `hybrid-partitioning'  hierarchy utilising both methods. Although these pseudo-partitionings can be defined more generally, as in the case of dynamical partitioning itself, we shall restrict our attention here to dynamical partitioning with respect to single susceptible nodes.

Generalising from the case of dynamical partitioning, we define a function $f_p(X,Y,i)$ to specify some pseudo-partitioning of subsets $X$ and $Y$ with respect to node $i$ and enable a systematic curtailing of the number of subsystem states necessary for a solvable model. By analogy with equation~\ref{part_eq1}, we have:
\begin{eqnarray} \nonumber
\dot{\la \psi^A_W\ra}&\approx &\sum_{k \in W} \mathbbm{1}(A_k= S) \Bigg[ - \sum_{n \in W} T_{kn} \mathbbm{1}(A_n= I)  \la \psi^A_W \ra \Bigg] \\ \nonumber
&& + \sum_{k \in W} \mathbbm{1}(A_k= I) \Bigg[  \sum_{n \in W} T_{kn} \mathbbm{1}(A_n= I) \la h_k^S( \psi^A_W) \ra - \gamma_k \la \psi^A_W \ra \Bigg] \\ \nonumber
&&- \sum_{k \in W} \mathbbm{1}(A_k= S) \sum_{n \in V \setminus W}T_{kn} \Bigg[  \Big( 1-f_p(n,W \setminus k,k) \Big) \la h_k^S(\psi^A_W) I_n\ra \\ \nonumber
&& + f_p(n,W \setminus k,k)\frac{\la h^S_k(\psi^A_W)\ra \la S_k I_n\ra}{\la S_k\ra}  \Bigg] \\ \nonumber
&&+ \sum_{k \in W} \mathbbm{1}(A_k= I) \sum_{n \in V \setminus W}T_{kn} \Bigg[  \Big( 1-f_p(n,W \setminus k,k) \Big) \la h_k^S(\psi^A_W) I_n\ra \\ 
&& + f_p(n,W \setminus k,k)\frac{\la h^S_k(\psi^A_W)\ra \la S_k I_n\ra}{\la S_k\ra}  \Bigg] . 
\label{part_eq}
\end{eqnarray}
So, when $f_p(X,Y,i)=1$, we treat $i$ as if it is dynamically partitioning with respect to $X$ and $Y$ and so the right-hand-side of the rate equation does not generate larger subsystem states. The specific type of approximate model depends on how $f_p(X,Y,i)$ is defined and is formed by assuming equality between the left and right hand sides.

Note that equation~\ref{part_eq} defines a solvable model that is based on the closure in equation~\ref{eqclosure}. However, other closures such as the Kirkwood-closure $\la \psi^A_i\psi^B_j\psi^C_k\ra \approx \la \psi^A_i\psi^B_j\ra\la \psi^B_j\psi^C_k\ra\la \psi^C_k\psi^A_i\ra/(\la \psi^A_i\ra\la \psi^B_j\ra\la \psi^C_k\ra)$ fall outside of this scheme. It is, however, straightforward to define a solvable hierarchy of approximate models that incorporates the standard Kirkwood closure as a special case. 

Let us denote the adjacency matrix for the underlying graph by $U$ ($U_{ij}=\textrm{sgn} (T_{ij}+T_{ji})$ for all $i, j\in V$). Then, for the probability of subsystem $\psi_W$ being in state $A$, we can approximate:
\beq
\la \psi^A_W\ra\approx\frac{\prod\limits_{i,j\in W : j<i}\la \psi^{A_i}_i\psi^{A_j}_j\ra^{U_{ij}}}{\prod\limits_{i\in W}\la \psi^{A_i}_i\ra^{m_i-1}}
\label{K_approx}
\eeq
where $m_i=\sum_{j\in W}U_{ij}$ is the number of neighbours of node $i$ in $W$ in the underlying graph and is also the number of times that the state of node $i$ appears on the numerator. For a fully connected subsystem of three nodes, this is seen to reproduce the standard Kirkwood closure. Using this general idea but keeping the state $\psi_W^A$ intact as before, we can write an alternative to equation~\ref{eqclosure}. For $\psi_W^A$, if $A_k=S$, $k\in W$ and $n\in V\setminus W$ and we suppose there is an arc from $n$ to $k$, then we can approximate: 
\begin{eqnarray}
\la \psi^A_WI_n\ra & \approx & \frac{ \la \psi_W^A \ra \la S_k I_n \ra}{ \la S_k \ra} \prod_{j \in W \setminus k}\left[ \frac{ \la \psi_j^{A_j} I_n \ra}{\la \psi_j^{A_j} \ra \la I_n \ra} \right]^{U_{nj}}. 
\label{K_approx1}
\end{eqnarray}
Examples of the application of this approximation can be found in section~\ref{5.5}.

We use this approximation to motivate the following hierarchy:
\begin{eqnarray} \nonumber
\dot{\la \psi^A_W\ra}&\approx&\sum_{k \in W} \mathbbm{1}(A_k= S) \Bigg[ - \sum_{n \in W} T_{kn} \mathbbm{1}(A_n= I)  \la \psi^A_W \ra \Bigg] \\ \nonumber
&& + \sum_{k \in W} \mathbbm{1}(A_k= I) \Bigg[  \sum_{n \in W} T_{kn} \mathbbm{1}(A_n= I) \la h_k^S( \psi^A_W) \ra - \gamma_k \la \psi^A_W \ra \Bigg] \\ \nonumber
&&- \sum_{k \in W} \mathbbm{1}(A_k= S) \sum_{n \in V \setminus W}T_{kn} \Bigg[  \Big( 1-f_p(n,W \setminus k,k) \Big) \la h_k^S(\psi^A_W) I_n\ra \\ \nonumber
&& + f_p(n,W \setminus k,k)\frac{ \la  h_k^S(\psi_W^A) \ra \la S_k I_n \ra}{ \la S_k \ra} \prod_{j \in W \setminus k}\left( \frac{ \la \psi_j^{A_j} I_n \ra}{\la \psi_j^{A_j} \ra \la I_n \ra} \right)^{U_{nj}} 
 \Bigg] \\ \nonumber
&&+ \sum_{k \in W} \mathbbm{1}(A_k= I) \sum_{n \in V \setminus W}T_{kn} \Bigg[  \Big( 1-f_p(n,W \setminus k,k) \Big) \la h_k^S(\psi^A_W) I_n\ra \\ 
&& + f_p(n,W \setminus k,k) \frac{ \la h_k^S(\psi_W^A) \ra \la S_k I_n \ra}{ \la S_k \ra} \prod_{j \in W \setminus k}\left( \frac{ \la \psi_j^{A_j} I_n \ra}{\la \psi_j^{A_j} \ra \la I_n \ra} \right)^{U_{nj}}  \Bigg] .
\label{part_eq2}
\end{eqnarray}

Either equation~\ref{part_eq} or~\ref{part_eq2} can be used in conjunction with suitable definitions of $f_p(X,Y,i)$ to generate hierarchies of approximate models. We shall mostly use equation~\ref{part_eq} for explicit examples. However, for completeness, we shall briefly discuss equation~\ref{part_eq2} in section~\ref{5.5}.

It is worth noting that both of these closures are based around a single $IS$ arc being added each time. Other schemes with more complex closures should also be possible. For example, Theorem~\ref{general_partitioning_theorem} allows closures where we do not necessarily need to have only singlet states in the denominator (see Figure~\ref{general_exact_closures}b).

\subsection{Cycle-partitioning}
\label{4.1}
With reference to Figure~\ref{partitioning}, although node $i$ is not dynamically partitioning with respect to $W \setminus i$ and node $j$, we might observe that it is in some sense `approximately' dynamically partitioning because the path length between $j$ and $W$ is reasonably long when $i$ is deleted. It seems sensible to define a type of pseudo-partitioning according to this path length.
\begin{figure}
   \centerline{\includegraphics[width=.5\textwidth]{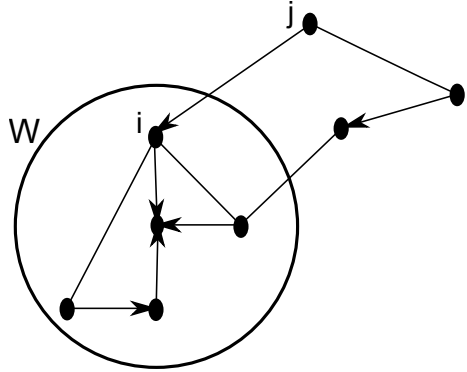}}
  \caption{An example of a node $i\in W$ which is not dynamically partitioning with respect to node $j$ and $W \setminus i$, but it is cycle-partitioning up to $x=2$.}
 \label{partitioning}
\end{figure}
\begin{definition}
The set of individuals that can reach at least one member of $X\subseteq V$, by traversing $a \in \mathbbm{N}$ arcs or less, is denoted $IN_a(X)$. Here and elsewhere, $\mathbbm{N}=\{0,1,2,...\}$.
\end{definition}

\begin{definition}
Node $i \in V$ is `cycle-partitioning' at order $x \in \mathbbm{N}$ with respect to disjoint and non-empty subsets $X, Y \subset V$, where $i\notin X\cup Y$, if and only if we have $f_{C(x)}(X,Y, i)=1$ where:
\beq
f_{C(x)}(X,Y, i)= \begin{cases}
1 & \mbox{if } \mbox{IN}_a(X) \cap \mbox{IN}_b(Y) = \emptyset  \quad \forall a,b: a+b=x \quad \mbox{ \rm{(}in} \quad D-i) \\
0 & \mbox{otherwise}
\end{cases}
\eeq
where $a,b \in \mathbbm{N}$.
\end{definition}

We make the following observations: i) If the network is undirected then $f_{C(x)}(X,Y, i)=0$ if and only if there is at least one path of length $x$ or less between some member of $X$ and some member of $Y$ when $i$ is deleted. ii) An individual who is dynamically partitioning with respect to two subsets is also cycle-partitioning at all orders with respect to those subsets. iii) In Figure~\ref{partitioning}, node $i$ is cycle-partitioning with respect to $W\setminus i$ and $j$ for $x=0$, $x=1$, and $x=2$, but not $x>2$. iv) Any individual $i \in V$ is always cycle-partitioning at order $x=0$ with respect to any other two subsets. 

Adapting Corollary~\ref{partitioning_theorem} such that cycle-partitioning individuals of order $x \in \{0,1,2, \ldots \}$ are `treated' as dynamically partitioning individuals, we substitute $f_{p}(n,W \setminus k, k)=f_{C(x)}(n,W \setminus k, k)$ into equation~\ref{part_eq}.

\begin{remark}
By applying this rate equation to every individual in the network for states $S$ and $I$ and then reapplying to every new subsystem state which emerges, we obtain a closed set of differential equations which form the $x$th model in a hierarchy of approximating models (note that the model corresponding to $x=0$ is the pair-level model given by equation~\ref{pair_level}). The associated set of induced subsystem states will be denoted by $M_{C(x)}$.
\end{remark}

\subsubsection{Examples}

We can consider cycle-partitioning for the network in Figure~\ref{partitioning_examples}b. If we cycle-partition at $x=1$, then the first two terms of equation~\ref{squarepairs} are closed at the level of pairs. Specifically, for the first term, node 2 is cycle-partitioning with respect to nodes 1 and 3. For the second term, node 1 is cycle-partitioning with respect to nodes 2 and 4. This gives:
\beq \nonumber
\dot{\la S_1I_2\ra}&\approx&\frac{\la S_1S_2 \ra \la S_2 I_3\ra}{\la S_2 \ra}- \frac{\la S_1I_2 \ra \la S_1 I_4\ra}{\la S_1 \ra}-\frac{\la S_1I_2\ra\la S_1I_5\ra}{\la S_1\ra}-\frac{\la S_1I_2\ra\la S_1I_6\ra}{\la S_1\ra} \\ \nonumber
&&-(1+g)\la S_1I_2\ra.
\eeq
Thus, triples within the square are no longer `kept intact', and so, within the square, the model closes at the level of pairs. However, triples made up of the members of the triangle are kept intact. For example, we have:
\beq\nonumber
\dot{\la S_5I_6\ra}=\la I_1S_5 S_6 \ra- \la I_1 S_5 I_6 \ra  - (1+g)\la S_5 I_6 \ra .
\eeq

Figure~\ref{cycle_partitioning} 
\begin{figure}
   \centerline{\includegraphics[width=1\textwidth]{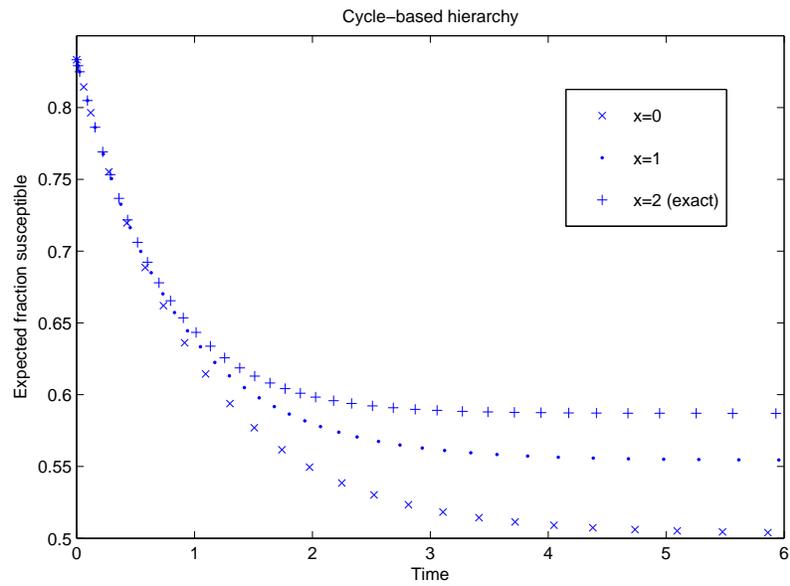}}
  \caption{Cycle-partitioning applied to the scenario in Figure~\ref{partitioning_examples}b with $x=0$ which corresponds to the pair-level model through $x=1$ and finally $x=2$ which is exact for this scenario. Here we assume that all individuals are susceptible at time $t=0$ with probability $5/6$ and infected otherwise (the states of individuals are initially statistically independent). We have assumed a transmission rate of unity across each link and a removal rate of unity.}
 \label{cycle_partitioning}
\end{figure}
shows this hierarchy of models. Here, the $x=0$ model is the pair-level model (equation~\ref{pair_level}). The $x=1$ model is an improvement since it picks up the triangle. The $x=2$ model picks up the square as well and is equivalent to the exact closure model (consistent with the master equation). 

If we apply cycle-partitioning to Figure~\ref{partitioning_examples}c instead, then the $x=0$ model is the pair-level model as always. The $x=1$ model is also the pair-level model and the $x=2$ model is equivalent to the exact closure model. Hence, cycle-partitioning does not necessarily lead to improved models as $x$ increases and it does not always lead to a reduction in system size with respect to the exact closure model. The results from the $x=0$ pair-level model and the exact model applied to Figure~\ref{partitioning_examples}c can be seen in the section on size-partitioning below (Figure~\ref{motif_partitioning}) and so are not reproduced here. 

An extreme example of the failure of cycle-partitioning to produce large hierarchies of approximate models is given by the triangular lattice shown in Figure~\ref{triangle_lattice}. 
\begin{figure}
  \centerline{\includegraphics[width=.6\textwidth]{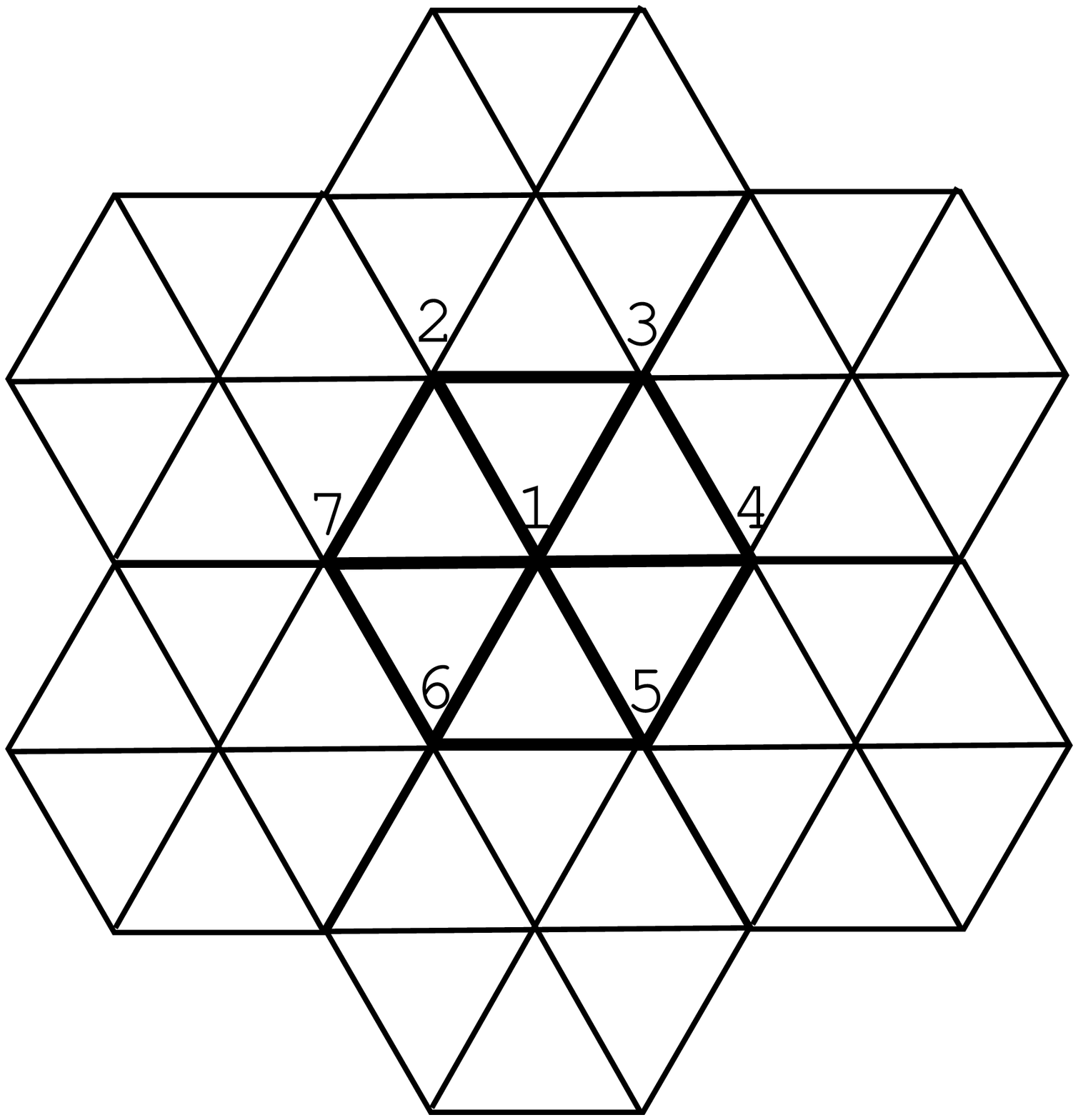}}
  \caption{A triangle lattice - an extreme example where cycle-partitioning at order greater than $x=0$ requires subsystem states which contain all individuals.}
 \label{triangle_lattice}
\end{figure}
Here, the $x=0$ model is the pair-level model. For $x=1$, consider the triple $A_3S_1C_4$ ($\forall A,C\in\{S,I\}$). Here we do not have cycle-partitioning since by deleting node 1, there is a path of length 1 between nodes 3 and 4. As we move to order 4 motifs, (e.g. adding a node to the above triple either by the edge (1,2) or the edge (3,2)), it is readily seen that there will always exist motifs which do not cycle-partition for $x=1$ at all orders. Hence for the triangle lattice, even for $x=1$ cycle-partitioning, we obtain a model with motif states at the size of the full network. Some cycle-partitioning does occur however, so the resulting model is not exact. For example, for the triple $A_2S_1C_4$, deleting node 1 means that the shortest path from 2 to 4 is via node 3 and is of length 2. So we have cycle-partitioning here. We also have it for states $A_7S_1C_4$. This state is also cycle-partitioning at $x=2$ (the path length from node 7 to node 4 after deletion of node 1 is 3) but we no longer cycle-partition $A_2S_1C_4$. Finally, at $x=3$, no cycle-partitioning occurs anywhere and we have an exact model containing subsystem states at the size of the system ($M_{C(3)}=M$).

In general, if the largest transmission block in a network has $n$ individuals, then any cycle-partitioning model of order $x\ge n-2$ is exact (see Theorem~\ref{theoremB1} in Appendix B). This is illustrated by the network in Figure~\ref{partitioning_examples}b where the largest transmission block is of size $n=4$ and the $x=2$ cycle-partitioning model is exact (Figure~\ref{cycle_partitioning}). This is also the case for the graph in Figure~\ref{partitioning_examples}c where $n=5$ and the $x=3$ model is exact (the $x=2$ model is also happens to be exact here as well). Another general result is that if the smallest directed sub-block consists of $n$ individuals, then the cycle-partitioning models of order $x<n-2$ are all equivalent to the pair-level (x=0) models (see Theorem~\ref{theoremB2} in Appendix B). This is illustrated by the graph in Figure~\ref{partitioning_examples}c where the smallest directed sub-block is $n=4$, and we found that the $x=1$ cycle-partitioning model is the same as the pair-level model.

\subsection{Size-partitioning}
\label{4.2}
The issues arising in some networks such as Figure~\ref{partitioning_examples}c, where even cycle-partitioning at $x=2$ requires subsystem states containing all individuals, and the extreme example of the triangular lattice, motivate an alternative pseudo-partitioning approach whereby the sizes of subsystem states are more directly constrained. 
\begin{definition}
\beq
f_{S(x)}(X)= \begin{cases}
1 & \mbox{if } |X|=x+1 \\
0 & \mbox{otherwise}
\end{cases}
\eeq
where $X \subseteq V$ and $x \in \mathbbm{N}$.
\end{definition}
Here we make the substitution $f_{p}(n,W \setminus k, k)=f_{S(x)}(W \setminus k)$ into equation~\ref{part_eq}.

\begin{remark}
As with previous pseudo-partitioning, a complete approximate model arises from the equations for the individual-level states and then repeatedly applying this equation to each subsystem state that emerges. As with cycle-partitioning, the $x=0$ size-partitioning model corresponds to the pair-level model. 
\end{remark}
\subsubsection{Examples}
As an example, consider the $x=1$ size-partitioning model for Figure~\ref{partitioning_examples}c, where the cycle-partitioning hierarchy was redundant.
Equation~\ref{eqp2} now becomes:
\begin{eqnarray} \nonumber
\dot{\la S_1S_2I_3\ra}&\approx&\frac{\la S_1S_2S_3\ra\la S_3I_4\ra}{\la S_3\ra}+\frac{\la S_1S_2S_3\ra\la S_3I_5\ra}{\la S_3\ra} -\frac{\la S_1S_2I_3\ra\la S_1I_4\ra}{\la S_1\ra}\\ \nonumber
&& \frac{\la S_1S_2I_3\ra\la S_1I_5\ra}{\la S_1\ra}-(1+g)\la S_1S_2I_3\ra .
\end{eqnarray}
For $x=2$ size-partitioning, equation~\ref{eqp2} is left untouched since the exact rate equation for a subsystem state of size 3 does not involve subsystem states larger than 4. However, equation~\ref{eqp3} becomes:
\beq
\dot{\la S_1S_2S_3I_4\ra}&\approx&-\frac{\la S_1S_2S_3I_4\ra\la S_1I_5\ra}{\la S_1\ra}-\frac{\la S_1S_2S_3I_4\ra\la S_3I_5\ra}{\la S_3\ra}-(2+g)\la S_1S_2S_3I_4\ra .
\nonumber
\eeq
 In this way, we obtain three different approximate models: $x=0$, $x=1$ and $x=2$. For $x>2$, the model is exact.
Figure~\ref{motif_partitioning} shows results from the application of each of these three approximate models and the exact $x=3$ model to SIR dynamics on the network depicted in Figure~\ref{partitioning_examples}c. An interesting feature that should be noted for the $x=2$ model is that it very slightly underestimates the rate of spread of the epidemic. Typically, experience shows that the closure of these equations leads to over-estimation of the rate of spread, but this provides a counter example.

\begin{figure}
   \centerline{\includegraphics[width=1\textwidth]{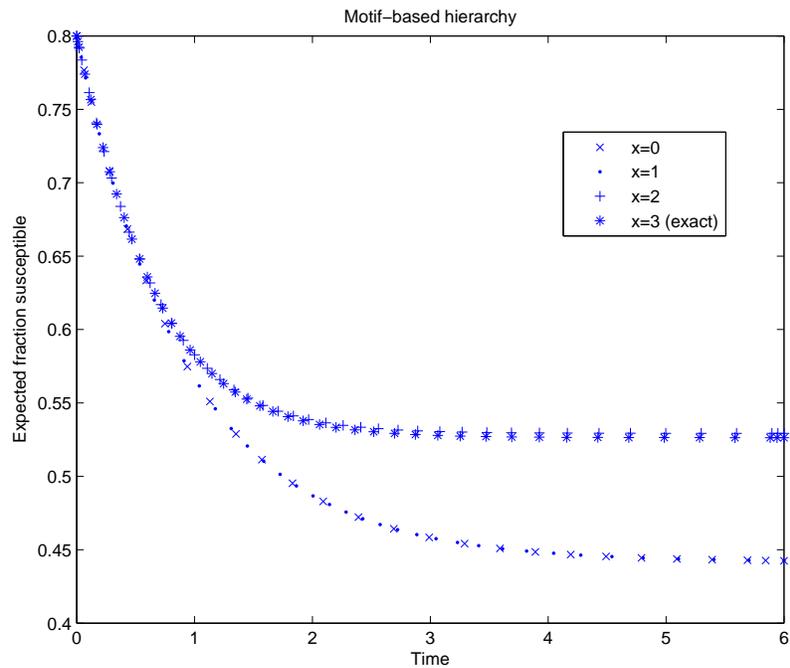}}
  \caption{Size-partitioning applied to the scenario in Figure~\ref{partitioning_examples}c with $x=0$ which corresponds to the pair-level model through $x=1$, $x=2$, and finally $x=3$ which is exact for this scenario. An individual is assumed to be initially susceptible with probability $4/5$ and infected otherwise (the states of individuals are initially statistically independent). We have assumed a transmission rate of unity across each link and a removal rate of unity.}
 \label{motif_partitioning}
\end{figure}
While size-partitioning will generate a large hierarchy of approximate models where cycle partitioning fails to do so (such as for the triangular lattice), it has problems of its own. Specifically, we see from Figure~\ref{motif_partitioning} that since the smallest cycle size in Figure~\ref{partitioning_examples}c is 4, the $x=1$ size-partitioning model is almost identical to the $x=0$ pair-level model. The $x=3$ and $x=2$ models are also almost identical. Hence, the extra computation in evaluating at $x=1$ and $x=3$ is wasteful. In this sense, cycle-partitioning has an advantage by only picking up complete cycles in the network. 

An additional problem with size-partitioning is that it ignores genuine dynamical partitioning. For example, for Figure~\ref{partitioning_examples}b, we would require motif sizes of 6 ($x=4$) to describe this exactly within the size-partitioning scheme. However, if we permit genuine dynamical partitioning, we only need motif sizes of less than or equal to 4. This issue is readily resolved by considering the modified scheme:
\beq
f_{E,S(x)}(X,Y,i)= \begin{cases}
1 & \mbox{if } f_E(X,Y,i)=1 \mbox{ or } f_{S(x)}(Y)= 1 \\
0 & \mbox{otherwise}
\end{cases}
\eeq
which incorporates genuine dynamical partitioning into size-partitioning. With this rule, in Figure~\ref{partitioning_examples}b, the genuine dynamical partitioning around node 1 is utilised wherever possible.

\subsection{Hybrid-partitioning}
Both cycle-partitioning and size-partitioning have their merits. Size-partitioning avoids unnecessarily large motif states where cycle-partitioning cannot be effectively implemented beyond an early stage, such as in the triangle lattice. On the other hand, cycle-partitioning picks out cycles in the network and closes at the pair level unless complete cycles can be incorporated, avoiding wasteful computation with minimal gain in accuracy.

We can construct a hybrid-partitioning scheme that captures the benefits of both cycle-partitioning and size-partitioning while avoiding the problems of both. We define this hybrid-partitioning as:

\begin{definition}
\beq
f_{C(x)S(x)}(X,Y,i)= \begin{cases}
1 & \mbox{if } f_{C(x)}(X,Y,i)=1 \mbox{ or } f_{S(x)}(Y)= 1 \\
0 & \mbox{otherwise}
\end{cases}.
\eeq
\end{definition}
This leads to a hierarchy of models defined by substituting $f_{p}(n,W \setminus k, k)=f_{C(x)S(x)}(n,W \setminus k, k)$ into equation~\ref{part_eq}.
This also has the pair-level model for $x=0$. We also note that alternative hierarchies could be designed with different values of $x$ for the size-partitioning and the cycle-partitioning parts.

This closure benefits from the advantages of both cycle-partitioning and size-partitioning. Firstly, if there are only large cycles, the hierarchy is closed at a low order by cycle-partitioning. This is desirable since, as illustrated in Figure~\ref{motif_partitioning}, continuing on generates little benefit unless we are able to continue to the size of the smallest cycle. However, if the system is not amenable to cycle-partitioning, as in the triangular lattice, then size-partitioning is required. A network illustrating the benefits of this is shown in Figure~\ref{hybrid_partitioning}. 
\begin{figure}
   \centerline{\includegraphics[width=.5\textwidth]{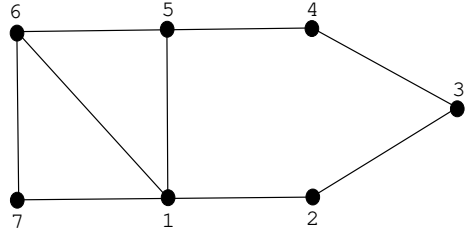}}
  \caption{A graph that illustrates the benefits of hybrid-partitioning. Expanding from node 1 using $x=1$, we utilise both cycle-partitioning and size-partitioning capturing the advantages of both.}
 \label{hybrid_partitioning}
\end{figure}
For hybrid-partitioning with $x=1$, let us start with the probability that node 1 is infectious:
\beq
\dot{\la I_1\ra}=\la S_1I_2\ra+\la S_1I_5\ra+\la S_1I_6\ra+\la S_1I_7\ra-g\la I_1\ra.
\label{I_1eq}
\eeq
For the first of these terms on the right-hand-side, the corresponding approximate differential equation is:
\beq \nonumber
\dot{\la S_1I_2\ra}&\approx&\frac{\la S_1S_2\ra\la S_2I_3\ra}{\la S_2\ra}-\frac{\la S_1I_2\ra\la S_1I_5\ra}{\la S_1\ra}-\frac{\la S_1I_2\ra\la S_1I_6\ra}{\la S_1\ra} \\ \nonumber
&&-\frac{\la S_1I_2\ra\la S_1I_7\ra}{\la S_1\ra}-(1+g)\la S_1I_2\ra
\eeq
where we have employed $x=1$ cycle-partitioning. For the term $\la S_1I_5\ra$ in equation~\ref{I_1eq} we obtain:
\beq\nonumber
\dot{\la S_1I_5\ra}&\approx& \frac{\la S_1S_5\ra\la I_4S_5\ra}{\la S_5\ra}+\la S_1S_5I_6\ra-\frac{\la S_1I_5\ra\la S_1I_2\ra}{\la S_1\ra} \\ \nonumber
&& -\la S_1I_5I_6\ra- \frac{\la S_1I_5\ra\la S_1 I_7 \ra}{\la S_1\ra}
\eeq
where, again, $x=1$ cycle-partitioning has been implemented where possible. For the second term in this expression, we have:
\beq \nonumber
\dot{\la S_1S_5I_6\ra}&\approx&\frac{\la S_1S_5S_6\ra\la S_6I_7\ra}{\la S_6\ra}-\frac{\la S_1S_5I_6\ra\la S_1I_7\ra}{\la S_1\ra} \\ \nonumber
&&-\frac{\la S_1S_5I_6\ra\la I_4S_5\ra}{\la S_5\ra}-\frac{\la S_1S_5I_6\ra\la S_1I_2\ra}{\la S_1\ra} - (2+g)\la S_1S_5I_6\ra.
\eeq
Here, the closures on the first line are via $x=1$ size-partitioning, whereas the closures on the second line are via meeting the criteria for both $x=1$ size-partitioning and $x=1$ cycle-partitioning.

So, this hybrid-partitioning obtains the best of both methodologies. Cycle-partitioning avoids unnecessarily including extra terms in the large cycle 1-2-3-4-5-1 which we have seen (Figure~\ref{motif_partitioning}) generates minimal extra accuracy. Size-partitioning forces partitioning where the motif sizes get beyond a specified level, here constraining the maximum motif size to be 3.

\subsection{Alternative closure}
\label{5.5}
Before leaving this section, we include a brief aside on using the alternative closure defined in equation~\ref{K_approx1}. In this case, we can still apply the cycle, size and hybrid methods, but we use equation~\ref{part_eq2} in place of equation~\ref{part_eq}. Two examples of applying this are illustrated in Figure~\ref{kirkwood_closure}. 
\begin{figure}
   \centerline{\includegraphics[width=.8\textwidth]{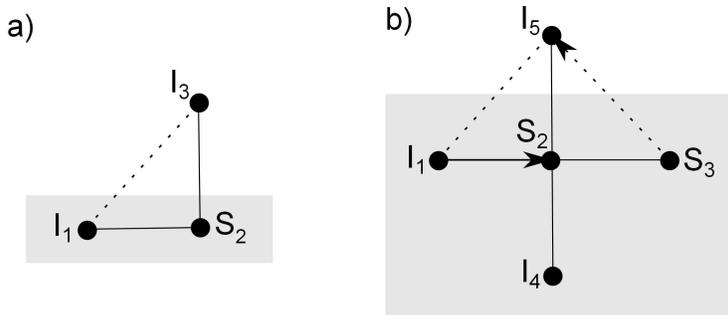}}
  \caption{Two simple examples of applying the alternative closure rule as encoded by equation~\ref{part_eq2}. The shaded region specifies the initial subsystem state, and there is a new $IS$ link towards it in accordance with the way in which the induced state spaces are built. The dashed lines represent additional links between the new node and the original subsystem (these would be ignored by the closure rule in equation~\ref{part_eq}.)}
 \label{kirkwood_closure}
\end{figure}
Here the shaded regions represent the existing subsystem states and the solid lines coming out of these regions represents the new infectious node being added on. The dashed lines represent other links between the new infectious nodes and the original subsystems. Supposing that the criteria for pseudo-partitioning is met at this stage (i.e. the relevant $f_p(.)=1$), for Figure~\ref{kirkwood_closure}a we obtain
\beq
\la I_1S_2I_3\ra\approx\frac{\la I_1S_2\ra\la S_2I_3\ra\la I_1I_3\ra}{\la I_1\ra\la S_2\ra\la I_3\ra},
\nonumber
\eeq
and for Figure~\ref{kirkwood_closure}b, we obtain
\beq
\la I_1S_2S_3I_4I_5\ra\approx\frac{\la I_1S_2S_3I_4\ra\la I_1I_5\ra\la S_2I_5\ra\la S_3I_5\ra}{\la I_1\ra\la S_2\ra\la S_3\ra\la I_5\ra^2}.
\nonumber
\eeq
We note that for cycle-partitioning with $x>0$, both closure methods become equivalent (equation~\ref{part_eq2} reduces to equation~\ref{part_eq}) since the types of additional links drawn in Figure~\ref{kirkwood_closure} could not be present.

Notice that when the closure of triples always occurs (e.g. $x=0$ cycle-partitioning or $x=0$ size-partitioning), the variant of the pair-level models introduced in \cite{Sharkey08} and \cite{Sharkey11} is obtained under this closure. This variant is expected to be able to handle clustered networks more accurately than the variant considered in \cite{Sharkey13} and \cite{Kiss} that follows from equation~\ref{part_eq}.

\section{Discussion}
Recently it has been possible to establish exact and practicable representations of stochastic epidemic dynamics on finite tree networks \cite{Sharkey13} using closure methodologies evaluated at the level of individuals \cite{Sharkey08,Sharkey11}. Message-passing also gives exact representations on trees \cite{KarrerNewman} and this can be shown, under some circumstances with Poisson transmission processes, to be equivalent to moment closure models \cite{Wilkinson}. Under suitable and very restrictive homogeneity assumptions, population-level versions of these closed models (e.g. \cite{Keeling99}) can also be exactly derived  on idealised graphs with homogeneous initial conditions \cite{Sharkey08}. 

Within the individual-level closure construction, it is possible to go beyond trees and obtain exact representations of epidemic dynamics on some networks with cycles using the idea of dynamical partitioning on the graph \cite{Kiss}. Here we defined dynamical partitioning on arbitrary networks and also observed that it applies to both Markovian and non-Markovian SIR dynamics. In the Markovian case with Poisson transmission and removal processes, we can use dynamical partitioning to define exact SIR moment closure models. The extent to which these models are computationally viable depends primarily on the underlying structure of the network.

More specifically, starting from the probabilities of the states of individual nodes in a given network, we uniquely defined the full set of exact induced moment equations by automatically implementing dynamical partitioning where applicable. We also defined transmission blocks as a natural decomposition of a network for the closure of SIR models. Transmission blocks represent a possible extension of the block concept in graph theory into directed networks. Using this concept, we proved a theorem stating that the size of the largest subsystem state appearing in the set of moment equations is equal to the size of the largest transmission block.

We also investigated hierarchies of approximate moment closure models. In the epidemic literature, it is normally the case that moment closure models are constructed at the level of pairs, or occasionally for triples or quads \cite{Matsuda, Bauch,House}. This is often accompanied with an assertion that higher order models exist. However, to our knowledge, these higher order epidemic models have never been defined explicitly. This is understandable since these models rapidly become too complex to be of real practical relevance, but it does leave open the theoretical question of how these models can be defined \cite{Sharkey11}. To address this, we introduced `pseudo-partitioning' to construct complete hierarchies of approximate closed models that are well-defined at all orders. In fact, we defined several hierarchies of closed models; one in terms of motif size, one in terms of the size of cycles in the network, and a hybrid method taking the best of both of the previous methods. Undoubtedly other hierarchies can be defined as well. In addition, we investigated two mechanisms of closure - one based on exact dynamical partitioning and the other which is more related to the Kirkwood closure. 

The closure based directly around dynamical partitioning has the variant of the closure model considered by \cite{Sharkey13} as its zeroth order variant (for all of the size, cycle and hybrid approaches). The hierarchies based around the alternative closure all have the model introduced in \cite{Sharkey08} and \cite{Sharkey11} as their zeroth order variant (this is designed to handle networks with clustering in a more effective way).  We also observed that the conditions for cycle-partitioning at orders greater than zero mean that both methods of closure become equivalent.

The hierarchies of models generated some interesting observations concerning the convergence to exactness with order. For example, for size-partitioning, the models converge to the exact solution with increasing order, but this convergence is not always monotonic (see Figure~\ref{motif_partitioning}). It is typical for moment closure models of SIR epidemics to over-exaggerate the spread of an epidemic, but here we observed a counter example (see also \cite{Sharkey11} where this is discussed as a possibility). An unanswered question is whether the approximate models always increase in accuracy as the order of the hierarchy increases. Intuitively we would expect that they do, and this is validated by the examples so far investigated.

\section*{Acknowledgements}
\noindent RRW acknowledges support from EPSRC (DTA studentship). We thank two anonymous reviewers for helpful suggestions. 

\appendix

\section{Proof of the underpinning results for Theorem~\ref{theoremA0}}
Theorem~\ref{theoremA0} follows from Corollary~\ref{theoremA2} and Lemma~\ref{theoremA3} below.
\begin{definition}
A set $W_n \subset V$ of size $|W_n|=n$ can be `generated' from a set $W_m \subset V$ of size $|W_m|=m$ where $2\le m<n$ if and only if a sequence of sets $W_m, \ldots, W_i, \ldots , W_n$ exist where $W_{i+1}=W_i \cup \{k\}$, where $k$ is a single node in $V \setminus W_i$, and there exists an arc from $k$ towards some individual $j \in W_i$ which is not dynamically partitioning relative to $k$ and $W_i \setminus \{j\}$.
\label{defnA1}
\end{definition}

\begin{remark}
The above definition is constructed such that there exists a subsystem state $A:W \to \{S,I \}$ belonging to $ M_E$, where $|W|>2$, if and only if $W$ can be generated from some connected pair. This follows from the definition of $M_E$ via equation~\ref{part_eq1}. 
\end{remark}
\begin{lemma}

If a set $W$ can be generated from some connected pair, then there exists $X\supseteq W$ such that $D[X]$ is a directed sub-block. There also exists some node $i\in W$ that is reachable from all other nodes in both $D[W]$ and $D[X]$.
\label{theoremA5}

\end{lemma}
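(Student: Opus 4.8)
The plan is to prove, by induction along a generation sequence $W_2,W_3,\dots,W_n=W$, the statement that for each $i$ there is a set $X_i\supseteq W_i$ with $D[X_i]$ a directed sub-block and a node $r_i\in W_i$ that is reachable from every other node of both $D[W_i]$ and $D[X_i]$; the Lemma is then the case $i=n$. The base case $i=2$ is immediate: $W_2$ is a connected pair, so the underlying graph of $D[W_2]$ is a single edge (a block), and taking $X_2=W_2$ and $r_2$ to be a node of the pair into which an arc points gives a node reachable from both.

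For the inductive step, write $W_{i+1}=W_i\cup\{k\}$, where there is an arc $k\to j$ with $j\in W_i$ and $f_E(k,W_i\setminus\{j\},j)=0$. Unwinding the definition of $f_E$, in $D-j$ there is a node $p_0$ that reaches $k$ and also reaches some $w\in W_i\setminus\{j\}$; running a breadth-first search out of $p_0$ in $D-j$ and letting $p$ be the vertex at which the tree paths to $k$ and to $w$ diverge, we obtain simple directed paths $P_1$ from $p$ to $k$ and $P_2$ from $p$ to $w$, both lying in $D-j$ and meeting only in $p$. Note $w\neq j$, $w\in W_i\subseteq X_i$, $j\in X_i$, and $k\neq j$. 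I would then set $X_{i+1}:=X_i\cup\{k\}\cup V(P_1)\cup V(P_2)$ and $r_{i+1}:=r_i$.

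Two properties of $X_{i+1}$ then need checking. Reachability: inside $D[X_{i+1}]$ every node of $X_i$ already reaches $r_i$, the node $k$ reaches $r_i$ via the arc $k\to j$ and a path in $D[X_i]$, every node of $P_1$ reaches $k$ along $P_1$ and hence $r_i$, and every node of $P_2$ reaches $w\in X_i$ along $P_2$ and hence $r_i$; the same paths restricted to $W_{i+1}$ show $r_i$ is reachable from every node of $D[W_{i+1}]$ as well. Biconnectedness: the underlying graph of $D[X_{i+1}]$ contains the underlying graph $G_0$ of $D[X_i]$ (biconnected, by the inductive hypothesis) together with the subgraph $H$ on vertices $V(P_1)\cup V(P_2)\cup\{k,j\}$ with edges those of $P_1$, $P_2$ and $\{k,j\}$; since adding further edges cannot break biconnectedness, it is enough to show $G_0\cup H$ is biconnected. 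I would invoke the elementary fact that if $G_0$ is biconnected, $H$ is connected, $V(H)\cap V(G_0)$ has at least two elements, and every $v\in V(H)\setminus V(G_0)$ has two internally disjoint paths in $H$ to $V(G_0)$, then $G_0\cup H$ is biconnected (a cut vertex $x$ would leave $V(G_0)\setminus\{x\}$ in one component, while a vertex $v$ of a second component would lie in $V(H)\setminus V(G_0)$ and one of its two disjoint $H$-paths to $V(G_0)$ would avoid $x$, a contradiction). These hypotheses hold: $H$ is connected and contains the two distinct vertices $j,w$ of $G_0$, and for any $v\in V(H)\setminus V(G_0)$ the path from $v$ along $P_1$ or $P_2$ to $w$ and the path from $v$ along $P_1$ or $P_2$ to $k$ and then across $\{k,j\}$ to $j$ meet only in $v$, because $P_1,P_2$ are internally disjoint and avoid $j$. (If $k$ is already in $X_i$, or $P_1,P_2$ contribute no new vertices, the step is only easier.)

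The main obstacle is exactly this biconnectedness check: one must rule out the new material creating a cut vertex, inside the old block or at the attachment points $j$ and $w$, even when the ancestor paths re-enter $X_i$ or run close to each other. Drawing $P_1,P_2$ from a single BFS tree so that they are internally disjoint, and isolating the abstract ``connected-attachment'' fact above, is what makes this manageable; the base case, the reachability bookkeeping, and the remark that extra arcs only help are routine.
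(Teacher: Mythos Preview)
Your argument follows the same inductive scheme as the paper's, which delegates the base case $|W|=3$ to Lemma~\ref{propA1} and the inductive step to Lemma~\ref{propA2} (both argued largely via Figures~\ref{drawing} and~\ref{figprop2}). Your version is in several respects tidier: extracting $P_1,P_2$ from a BFS tree rooted at a common ancestor $p$ unifies the paper's two cases (two vertex-disjoint paths from an outsider into $W_i$, versus a single path between two members of $W_i$), and your explicit ``ear-attachment'' fact supplies the biconnectedness verification that the paper leaves to the pictures.

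One wrinkle to be aware of: under the paper's conventions a connected pair is \emph{not} itself a directed sub-block (this is why Theorem~\ref{theoremA0} has the clause ``or it contains 2 individuals if there are no transmission blocks'', and why the paper starts its induction at $|W|=3$). So your stated invariant ``$D[X_i]$ is a directed sub-block'' is formally false at $i=2$ if you take $X_2=W_2$. This does not damage the argument: what your inductive step actually uses about $G_0$ is that it is connected with no cut vertex, which a single edge certainly satisfies, and once $|X_n|\ge |W_n|\ge 3$ your no-cut-vertex conclusion gives a bona fide directed sub-block. Either rephrase the invariant accordingly, or begin the induction at $i=3$ as the paper does.
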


\begin{proof} 
The proof follows by induction. Lemma~\ref{propA1} proves the statement for the case $|W|=3$ while Lemma~\ref{propA2} establishes the inductive step.
\end{proof}

\begin{cor}
If $A: W \to \{ S,I  \}$ is a subsystem state belonging to  $M_E$, where $|W| > 2$, then there exists $X \supseteq W$ such that $D[X]$ is a directed sub-block. 
\label{theoremA2}
\end{cor}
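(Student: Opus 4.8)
The plan is to read this off directly from Lemma~\ref{theoremA5} together with the remark that precedes it, so the corollary should require essentially no new argument. First I would recall the characterisation recorded in that remark: a subsystem state $A:W\to\{S,I\}$ with $|W|>2$ lies in $M_E$ precisely when the vertex set $W$ can be generated (in the sense of Definition~\ref{defnA1}) from some connected pair, this being exactly how equation~\ref{part_eq1} builds up $M_E$ when started from the single-node states $S_i$, $I_i$ and reapplied to every emerging subsystem state. Hence, from the hypothesis that $A\in M_E$ with $|W|>2$, we obtain immediately that $W$ is generated from some connected pair.

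Next I would apply Lemma~\ref{theoremA5} to this $W$: it asserts that whenever $W$ can be generated from a connected pair there is a superset $X\supseteq W$ with $D[X]$ a directed sub-block (and, moreover, a node reachable from all others in both $D[W]$ and $D[X]$, though the weaker statement in the corollary simply discards the ``reachable from all others'' part and keeps only the existence of $X$). This $X$ is exactly the object the corollary asserts to exist, so the proof closes in one line. The only point worth flagging in the write-up is that one must be careful to note that the ``only if'' direction of the remark is the one being used, and that the extra conclusion of the lemma is not needed here.

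All of the genuine content is therefore pushed into Lemma~\ref{theoremA5}, so the main obstacle lies there rather than in the corollary itself: one must show by induction on $|W|$ that adding a node $k$ via an arc into some $j\in W_i$ that is not dynamically partitioning relative to $k$ and $W_i\setminus\{j\}$ never lets the growing set leave the ``hull'' of a single directed sub-block --- i.e. that biconnectedness of the underlying graph and the existence of a universally reachable vertex can always be recovered by enlarging to a suitable $X$ at each step. The base case $|W|=3$ (Lemma~\ref{propA1}) and the inductive step (Lemma~\ref{propA2}) carry that burden; granting them, Corollary~\ref{theoremA2} is an immediate consequence.
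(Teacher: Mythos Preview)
Your proposal is correct and follows essentially the same approach as the paper: the paper's proof simply states that the corollary follows directly from Lemma~\ref{theoremA5} and Definition~\ref{defnA1}, which is exactly the one-line argument you describe (use the remark/definition to see that $W$ is generated from a connected pair, then invoke the lemma). Your additional commentary on where the real work lies is accurate but goes beyond what the corollary itself requires.
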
 
\begin{proof}
This follows directly from Lemma~\ref{theoremA5} and Definition~\ref{defnA1}.
\end{proof}

\begin{lemma}
If a set $W$ where $|W| =3$ can be generated from some connected pair, then there exists $X \supseteq W$ such that $D[X]$ is a directed sub-block, and some $i \in W$ is reachable from all others in both $D[W]$ and $D[X]$. 
\label{propA1}
\end{lemma}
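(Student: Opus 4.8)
The plan is to unpack Definition~\ref{defnA1} for the base case $|W|=3$. A set $W=\{a,b,c\}$ generated from a connected pair must arise as $W_2=\{a,b\}$ (a connected pair, say with an arc between $a$ and $b$) together with a third node $c$ that possesses an arc towards some node $j\in W_2$ such that $j$ is \emph{not} dynamically partitioning relative to $c$ and $W_2\setminus\{j\}$. The whole argument hinges on turning that ``not dynamically partitioning'' hypothesis into concrete arcs in the graph, so I would first carefully enumerate the two possible roles for $j$ (either $j=a$ or $j=b$) and record in each case what $f_E(c,W_2\setminus\{j\},j)=0$ gives us: there is a node $u$ from which both $c$ and the remaining node of $W_2$ are reachable by directed paths in $D-j$.

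Next I would use this reachability to exhibit the directed sub-block $D[X]$. The node $u$ reaches $c$ and (say) $b$ along $j$-avoiding paths, and combined with the arc from $c$ into $j$ and the arc between $a$ and $b$, one can splice these directed paths together with the three original arcs to produce a closed directed walk passing through all of $a,b,c$; taking $X$ to be the set of nodes on this walk (plus any needed to close it), $D[X]$ has the property that its underlying graph is biconnected (every node lies on a common cycle in the underlying graph) and it contains a node reachable from all others — indeed the construction naturally singles out a ``sink-like'' node of the walk. Here I would lean on the standard fact that a graph in which every pair of vertices lies on a common cycle is biconnected, so it suffices to produce such a cycle through the three nodes. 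The identification of the node $i\in W$ reachable from all others in $D[W]$ itself (not just in $D[X]$) requires a small extra check: among $a,b,c$ we must find one that the other two can reach inside $D[\{a,b,c\}]$, which follows by inspecting the directions of the at most three arcs internal to $W$ (the $a$--$b$ arc, the $c\to j$ arc, and whatever else the generation step forces) and choosing $i=j$ or its neighbour accordingly.

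The main obstacle I anticipate is the bookkeeping of \emph{directions}: ``dynamically partitioning'' is defined via $\mbox{IN}(\cdot)$, the set of nodes that can \emph{reach} a target, so all the path information is directional, and one has to be careful that the cycle assembled through $a,b,c$ is genuinely a directed structure witnessing reachability-from-a-common-source, not merely an undirected cycle — and then separately argue biconnectedness of the \emph{underlying} (undirected) graph. Matching these two requirements simultaneously, while keeping $X\supseteq W$ and ensuring the distinguished node $i$ works for both $D[W]$ and $D[X]$, is the delicate part. Once the case analysis on which of $a,b$ plays the role of $j$ is done cleanly, each subcase should reduce to routine path-splicing; I would organise the write-up around those subcases and defer nothing essential, since this lemma is the foundation of the induction in Lemma~\ref{theoremA5}.
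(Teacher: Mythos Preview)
Your strategy coincides with the paper's --- unpack the generating rule, convert $f_E(\cdot)=0$ into a common in-ancestor $u$ (in $D-j$) of $c$ and the remaining pair-member, and build $X$ from the resulting paths together with the two guaranteed arcs (two, not three: the arc inside $W_2$ and the arc $c\to j$) --- but two points need correcting.

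You will not in general obtain a \emph{closed directed walk} through $a,b,c$. Take $b\to a$, $c\to a$, $c\to b$ (here $j=a$ and $u=c$): the node $a$ has no out-arc, so no directed closed walk can pass through it, yet $D[\{a,b,c\}]$ is already a directed sub-block. What the construction yields in every case is a set $X$ whose \emph{underlying} graph is a single cycle (hence biconnected) together with a vertex reachable from all others along the directed arcs; these are exactly the two defining properties of a directed sub-block, but they do not combine into a directed cycle. Your parenthetical ``(plus any needed to close it)'' is a symptom of this mismatch and should be dropped along with the directed-walk framing.

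The case split that actually organises the argument is not ``which of $a,b$ is $j$'' but the structural dichotomy the paper uses: either (1) some node (possibly $c$ itself) has two vertex-disjoint directed paths to the two members of $W_2$, with $c$ penultimate on one of them, or (2) there is a directed path from one member of $W_2$ to the other with $c$ penultimate. One obtains this from your $u$ by passing to the last vertex shared by the two $u$-paths; case~(2) is precisely when that vertex is the pair-member itself. In either case, adjoining the arc internal to $W_2$ makes the underlying graph of the union a single cycle, and the \emph{target} of that arc --- call it $i$, with $j\to i$ in the paper's labelling --- is reachable from every vertex on it. Since also $c\to i$ or $c\to j\to i$ holds inside $D[\{i,j,c\}]$, the same $i$ serves as the distinguished node in both $D[W]$ and $D[X]$, so no separate inspection of internal arc directions is required.
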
 

\begin{proof}
We focus only on directed links since directed sub-blocks cannot be destroyed by making links undirected. With reference to Figure~\ref{drawing}, if a set $W_3=\{i,j,k\}$ can be generated from the pair $W_2=\{i,j\}$, with $j$ connected towards $i$, then there is a link from $k$ to either $i$ or $j$. Further, from the definition of dynamical partitioning and the generating rule, there are two possibilities: 1) there exists two vertex disjoint paths $P_1,P_2$ from some individual (which could be $k$) to both members of $W_2$, and where $k$ is the penultimate individual in one of these paths (see Figure~\ref{drawing}a and c), or 2) there exists a path $P_3$ from one member of $W_2$ to the other, and $k$ is the penultimate individual in this path (see Figure~\ref{drawing}b and d). Note that in all cases depicted in Figure~\ref{drawing}, $W_3$ is a subset of some directed sub-block in which $i$ is reachable from all others (and $i$ is reachable from all others in $D[W_3]$). 

\begin{figure}
   \centerline{\includegraphics[width=0.8\textwidth]{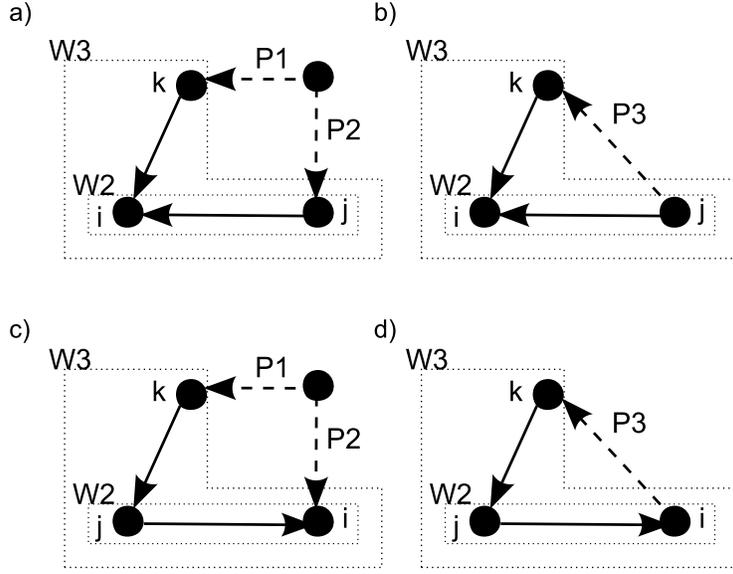}}
  \caption{Demonstration for Lemma~\ref{propA1}: `ways' in which a set $W_3=\{i,j,k \}$ can be generated from the pair $\{ i,j \}$, where $j$ is connected towards $i$. Note that $W_3$ is always a subset of some directed sub-block, and $i$ is reachable from all others in both $D[W_3]$ and the directed sub-block. The dashed arrows represent paths which may consist of any number of vertices.
}
 \label{drawing}
\end{figure}

\end{proof}

\begin{figure}
   \centerline{\includegraphics[width=0.8\textwidth]{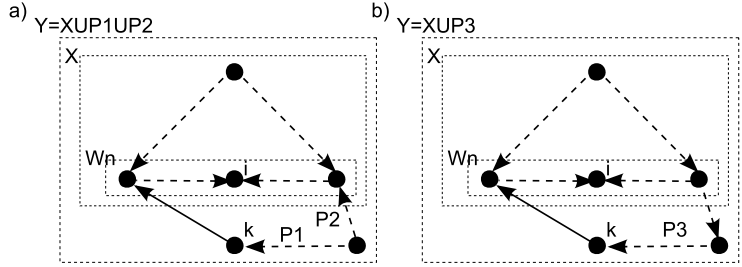}}
  \caption{Demonstration for Lemma~\ref{propA2}. Here, the single node in $X\setminus W$ is illustrative of the nodes in this set which must be connected by at least one path to node $i$, and where the underlying graph $G[X]$ is biconnected. We have placed node $k$ outside of $X$, but $k\in X\setminus W$ is also permitted. a) shows $k$ belonging to one of two vertex disjoint paths from some node to $W$ and b) shows $k$ as the penultimate individual in a path from a node in $W$ to a different node in $W$. In either case, $W_n \cup \{ k \}$ is seen to always be a subset of some $Y \supseteq X$ where $D[Y]$ is a directed sub-block in which $i$ is reachable from all others (and $i$ is reachable from all others in $D[W_n \cup \{ k \}]$).  }
 \label{figprop2}
\end{figure}

\begin{lemma}
If the statement made in Lemma~\ref{theoremA5} is true for the case where $|W|=n$, then it is also true when $|W|=n+1$.
\label{propA2}
\end{lemma}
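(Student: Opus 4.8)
The plan is to carry out the inductive step in essentially the same style as Lemma~\ref{propA1}, but now with $W = W_n$ ($|W_n| = n$) already sitting inside a directed sub-block $D[X]$, $X \supseteq W_n$, with a distinguished node $i \in W_n$ reachable from all others in both $D[W_n]$ and $D[X]$ (this is the inductive hypothesis). We are handed a node $k \in V \setminus W_n$ together with an arc from $k$ to some $j \in W_n$ that is not dynamically partitioning relative to $k$ and $W_n \setminus \{j\}$, and we must produce $Y \supseteq W_n \cup \{k\}$ such that $D[Y]$ is a directed sub-block in which some node (we will show it can again be taken to be $i$) is reachable from all others, and likewise in $D[W_n \cup \{k\}]$.

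First I would unpack what "$j$ is not dynamically partitioning relative to $k$ and $W_n \setminus \{j\}$" buys us, exactly as in Lemma~\ref{propA1}: in $D - j$ there is a node from which both $k$ and some member of $W_n \setminus \{j\}$ are reachable, which (after pruning) yields either (case 1) two vertex-disjoint directed paths $P_1, P_2$ from a common source to $k$ and to a member of $W_n \setminus \{j\}$ respectively — not passing through $j$ — or (case 2) a directed path $P_3$ between two distinct members of $W_n \setminus \{j\}$ not through $j$. In both cases the arc $k \to j$ lets us splice $k$ onto these paths. Now set $Y := X \cup \{k\} \cup (\text{internal vertices of the relevant paths})$. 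I would then argue, as in Figure~\ref{figprop2}: (i) the underlying graph of $D[Y]$ is biconnected — the inductive hypothesis gives biconnectedness of $D[X]$, the node $j \in W_n \subseteq X$ provides an attachment point, and the new path(s) together with the arc $k \to j$ form an "ear" that attaches $k$ and the path-interiors to $X$ through at least two vertex-disjoint routes (the arc to $j$ plus the disjoint path back into $X$ via a member of $W_n \setminus \{j\}$, using that $i$ is reachable from all of $W_n$ inside $D[X]$ to route everything to a common vertex); (ii) $i$ remains reachable from every node of $Y$: it was reachable from all of $X$ by hypothesis, $k$ reaches $i$ via $k \to j \rightsquigarrow i$, and every new path-interior vertex reaches either $k$ or a member of $W_n$ and hence $i$. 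The same reachability argument restricted to $W_n \cup \{k\}$ shows $i$ is reachable from all others in $D[W_n \cup \{k\}]$, since $k$ reaches $j$ directly by the arc and $j$ reaches $i$ within $D[W_n]$.

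The main obstacle I anticipate is the biconnectivity bookkeeping in step (i) — being careful that the "ear" really does attach along two internally-disjoint routes to the already-biconnected $D[X]$, handling the sub-cases where $k$ or the path sources lie inside $X$ (so the ear is short or degenerate) versus outside it, and making sure no cut-vertex is accidentally introduced at $j$ or at the point where a new path re-enters $X$. This is exactly the content hinted at by the phrase "$k \in X \setminus W$ is also permitted" in the caption of Figure~\ref{figprop2}, and the cleanest way to discharge it is via the ear-decomposition characterization of biconnected graphs: $D[X]$ is biconnected, and we are adding a path (ear) both of whose endpoints lie in the current biconnected subgraph, which preserves biconnectivity; one then only has to check that each newly added vertex indeed lies on such an ear, which follows by tracing $P_1,P_2$ (or $P_3$) together with the arc $k\to j$ and a route from the other endpoint back to $i \in X$. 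Reachability of $i$ is comparatively routine. With Lemma~\ref{propA1} as the base case and this step, Lemma~\ref{theoremA5} follows by induction, and hence so do Corollary~\ref{theoremA2} and ultimately Theorem~\ref{theoremA0}.
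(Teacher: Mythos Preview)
Your overall strategy matches the paper's: assume the inductive hypothesis gives $X \supseteq W_n$ with $D[X]$ a directed sub-block and $i$ reachable from all others in both $D[W_n]$ and $D[X]$; then use the failure of dynamical partitioning at $j$ (relative to $k$ and $W_n\setminus\{j\}$) to produce paths that attach $k$ to $X$ as an ``ear'', so that $D[Y]$ for $Y = X$ together with the path vertices is again a directed sub-block with $i$ still universally reachable. Your ear-decomposition justification for biconnectivity is more explicit than the paper's one-line assertion that $D[X\cup P_1\cup P_2]$ (resp.\ $D[X\cup P_3]$) is a directed sub-block, which is a plus.

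There is, however, a genuine slip in your case analysis. Your case~2 is stated as ``a directed path $P_3$ between two distinct members of $W_n\setminus\{j\}$ not through $j$''. This does not follow from the hypothesis and, more importantly, does not involve $k$ at all, so the sentence ``in both cases the arc $k\to j$ lets us splice $k$ onto these paths'' has no content in that case. What the failure of dynamical partitioning actually gives you is a node $v$ (in $D-j$) with directed paths to $k$ and to some $w\in W_n\setminus\{j\}$. Appending the arc $k\to j$ to the first path and then pruning, the correct dichotomy (the one the paper uses) is: either (1) two internally vertex-disjoint directed paths from some common source to two \emph{different} members of $W_n$, with $k$ the penultimate vertex on one of them (the one ending at $j$); or (2) a directed path $P_3$ from one member of $W_n$ to another member of $W_n$ with $k$ the penultimate vertex --- this is exactly what arises when the common source in (1) already lies in $W_n$. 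In both cases $k$ sits on the path and the endpoints lie in $W_n\subseteq X$, so your ear argument goes through with $Y = X\cup P_1\cup P_2$ or $Y = X\cup P_3$. Once you correct case~2 in this way, the rest of your argument (reachability of $i$ from $k$ via $k\to j\to\cdots\to i$ inside $D[W_n]$, and from every new path vertex via the path to a member of $W_n$) is fine and coincides with the paper's.
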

\begin{proof}
Firstly, note that $W_{n+1}$, where $|W_{n+1}|=n+1$, can be generated from some connected pair if and only if it can be generated from some set $W_n$, where $|W_n|=n$, which can itself be generated from some connected pair. Now suppose that Lemma~\ref{theoremA5} is true for the case where $|W|=n$, and let $W_n$ be a set of size $n$ that can be generated from some connected pair. Then we have a set $X\supseteq W_n$ such that $D[X]$ is a directed sub-block where, without loss of generality, $i \in W_n \subseteq X$ is reachable from all others in both $D[W_n]$ and $D[X]$. With reference to Figure~\ref{figprop2}, and again focusing only on directed links, if a set $W_{n+1}= W_n \cup \{ k \}$ ($k \notin W_n$) can be generated from $W_n$, then either there exist two vertex disjoint paths $P_1,P_2$ from some individual to two different members of $W_n$ and $k$ is the penultimate individual in one of these paths (Figure~\ref{figprop2}a), or there exists a path $P_3$ from one member of $W_n$ to a different member of $W_n$ and $k$ is the penultimate individual in this path (Figure~\ref{figprop2}b). This follows from the generating rule and the definition of dynamical partitioning. Note that if $P_1, P_2$ exist then $D[X \cup P_1 \cup P_2]$ is a directed sub-block in which $i$ is reachable from all others (and $i$ is reachable from all others in $D[W_{n+1}]$). Similarly, if $P_3$ exists then $D[X \cup P_3]$ is a directed sub-block in which $i$ is reachable from all others (and $i$ is reachable from all others in $D[W_{n+1}]$).
\end{proof}

\begin{lemma}
If there exists $X \subset V$ such that $D[X]$ is a directed sub-block, then there exists a subsystem state $A:X \to \{ S,I \}$ belonging to $M_E$.
\label{theoremA3}
\end{lemma}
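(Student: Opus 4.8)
The plan is to reduce the statement to a purely combinatorial one about the graph and then argue by induction on $|X|$. By the Remark preceding Lemma~\ref{theoremA5}, for $|X|>2$ a subsystem state $A:X\to\{S,I\}$ lies in $M_E$ precisely when $X$ can be \emph{generated} from some connected pair in the sense of Definition~\ref{defnA1}, so it suffices to prove the latter. The case $|X|=2$ is immediate: a directed sub-block on two nodes has a complete underlying graph, hence an arc, and the corresponding pair state is in $M_E$ by the Remark following the definition of $M_E$. Throughout I would exploit the fact that it is enough to witness ``not dynamically partitioning'' inside $D[X]$: a node of $D[X]-j$ that reaches two prescribed targets still reaches them in $D-j$, since reachability is monotone under adding arcs, so all partitioning checks may be carried out within $D[X]$.

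For $|X|\ge 3$ I would first settle the base case $|X|=3$, where $G[X]$ is forced to be a triangle; a short split over the orientations of the triangle that leave some node $i^*$ reachable from the other two shows directly that $X$ is generated from a connected pair incident to $i^*$, by naming in each case the starting pair, the added node $k$, the out-arc $k\to j$ it is attached by, and a common ancestor of $k$ and the third node inside $D[X]-j$.

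For the inductive step one cannot simply delete a vertex and recurse, because a biconnected graph need not remain biconnected after deleting a vertex (a long cycle is the obstruction), so $D[X]$ need not contain a proper directed sub-block on $|X|-1$ nodes. Instead I would build $X$ up directly. Fix a spanning in-arborescence $T$ of $D[X]$ rooted at $i^*$ (it exists because every node reaches $i^*$), write $\pi(v)$ for the head of the unique $T$-arc out of a non-root node $v$, start from the connected pair $\{i^*,c\}$ with $c$ a $T$-child of $i^*$, and add the remaining nodes one at a time in any order in which $\pi(v)$ precedes $v$. This secures part~(a) of the generating rule automatically --- the node $k$ being added has the arc $k\to\pi(k)$ into the current set $W$ --- and reduces the problem to part~(b): that some out-neighbour $j\in W$ of $k$ is not dynamically partitioning with respect to $k$ and $W\setminus\{j\}$, i.e.\ that some node of $D[X]-j$ reaches both $k$ and some member of $W\setminus\{j\}$. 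There are three degrees of freedom to coordinate --- the choice of $T$, the choice of admissible enumeration, and the choice of which out-neighbour of $k$ plays the role of $j$ --- and organising the enumeration along an open ear decomposition of $G[X]$, threading each ear towards $i^*$, is a convenient way to make these choices so that each addition step reduces to one of the two local configurations below.

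I expect part~(b) to be the main obstacle: after deleting the attachment node $j$ one needs a \emph{directed} common ancestor of $k$ and the rest of $W$, whereas biconnectivity of $G[X]$ only furnishes \emph{undirected} connectivity of $G[X]-j$. Bridging this is the whole technical content, and I expect it to come down to the same dichotomy that drives the forward direction in Figures~\ref{drawing} and~\ref{figprop2}, read in reverse: when $k$ is added, either (1) two vertex-disjoint directed paths run from a single node to two nodes already in $W$ with $k$ penultimate on one of them, so that common node is the required ancestor, or (2) a directed path runs from one node of $W$ to another with $k$ penultimate on it, so a suitable endpoint already in $W$ is. Establishing that biconnectivity of $G[X]$ together with the existence of $i^*$ always makes one of these configurations available, with $k$ and $j$ chosen compatibly with the ``$\pi$ precedes $v$'' invariant, is where the bulk of the case analysis lives; once it is in place the induction closes and the lemma follows.
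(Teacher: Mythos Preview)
Your reduction to the combinatorial generation problem matches the paper exactly, and you correctly identify the core dichotomy --- two vertex-disjoint directed paths from a common source into $W$ with $k$ penultimate on one, versus a single directed path between two nodes of $W$ with $k$ penultimate --- that certifies the non-partitioning needed at each generation step. The paper isolates precisely this statement as Lemma~\ref{propA3}.

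The substantive difference is that you fix an enumeration in advance (via an in-arborescence, then refined by an ear decomposition of $G[X]$) and must verify the generating rule for that \emph{predetermined} next node $k$, whereas the paper never commits to an order. Lemma~\ref{propA3} asserts only that, given any $W\subsetneq X$ with $i$ reachable from all others in both $D[W]$ and $D[X]$, \emph{some} $k\in X\setminus W$ can be added while the invariant is maintained. This existential version is strictly easier: the paper's argument (Figure~\ref{figprop3}) is a short case split on how an arbitrary node of $X\setminus W$ can be linked to $W$ inside $D[X]$, concluding that if neither configuration ($P_1,P_2$ or $P_3$) occurs for any such node then $D[X]$ fails biconnectivity or universal reachability of $i$, contradicting the sub-block hypothesis.

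Your deferred ``bulk of the case analysis'' is therefore harder than it needs to be: by fixing $k$ you must establish the dichotomy for \emph{that} $k$ (up to choosing the out-neighbour $j$), rather than merely for some $k$. This is plausibly achievable via the ear decomposition, but you have not carried it out, and doing so would essentially re-derive Lemma~\ref{propA3} under an added constraint. Dropping the prescribed order and arguing existentially, as the paper does, eliminates the arborescence, the ear decomposition, and most of the anticipated case analysis.
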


\begin{proof} 
If $D[X]$ is a directed sub-block in which $i \in X$ is reachable from all others, then there exists at least one arc $(j,i)$ in $D[X]$. The lemma then follows from lemma~\ref{propA3} below which proves that $X$ can be generated from $\{i,j\}$. 
\end{proof}
\begin{lemma}
Let $D[X]$ be a directed sub-block and let $i\in W \subset X$, where $|W| \ge 2$, be reachable from all others in both $D[W]$ and $D[X]$. In this case, some set $W \cup \{k\}$, where $k \in X \setminus W$, can be generated from $W$, and $i$ is reachable from all others in $D[W \cup \{ k \} ]$.
\label{propA3}

\end{lemma}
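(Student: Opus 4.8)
The plan is to show that $W$ can be extended by one node toward $i$ while preserving the property that $i$ is reachable from everything in the new induced subgraph. Concretely, I would argue as follows. Since $D[X]$ is a directed sub-block, its underlying graph $G[X]$ is biconnected, so there is no cut-vertex; in particular, for the set $W$ with $|W| \ge 2$ there must be some node $k \in X \setminus W$ that, in $G[X]$, lies on a path witnessing the $2$-connectivity between members of $W$. The goal is to locate $k$ so that adding it to $W$ corresponds to a legal application of the generating rule of Definition~\ref{defnA1}: there must be an arc from $k$ toward some $j \in W$, and $j$ must fail to be dynamically partitioning relative to $k$ and $W \setminus \{j\}$.

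First I would fix the node $i \in W$ that is reachable from all others in both $D[W]$ and $D[X]$, and consider the set $W' = W \setminus \{i\}$, which is nonempty. Because $G[X]$ is biconnected while $G[W]$ need not be, and because $X \supsetneq W$ (if $X = W$ the conclusion is read off directly — one can just take any in-neighbour), there is a path in $D[X]$ from some node of $W$ into $i$ whose interior vertices lie in $X \setminus W$; take $k$ to be the last interior vertex on such a path, i.e. the penultimate node before reaching $i$ (or before reaching some member of $W$). Then the arc $(k, j)$ with $j \in W$ exists. The key point is then to verify the non-partitioning condition: deleting $j$, there must still be a node from which both $k$ and some member of $W \setminus \{j\}$ are reachable. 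I would get this from biconnectivity of $G[X]$ — removing a single vertex $j$ cannot disconnect $G[X]$, so $k$ still connects to $W \setminus \{j\}$ within $X$, and by orienting along the appropriate paths (using that $i$, or a suitable node, reaches the relevant vertices in $D[X]$) one exhibits the required common ancestor, so $f_E(k, W \setminus \{j\}, j) = 0$. Finally, reachability of $i$ from all others in $D[W \cup \{k\}]$ follows because $i$ was already reachable from all of $W$ in $D[W] \subseteq D[W \cup \{k\}]$, and $k$ reaches $j \in W$ directly by the arc $(k,j)$, hence reaches $i$.

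The main obstacle I anticipate is the careful case analysis in choosing $k$ so that \emph{both} requirements hold simultaneously — that the arc into $W$ goes the right direction (toward $W$, giving $(k,j) \in A$) \emph{and} that $j$ genuinely fails to partition. Biconnectivity gives the undirected connectivity needed for the non-partitioning side, but one has to be careful that the directedness (the existence of the arc from $k$ into $W$, and the existence of a common reachable-from ancestor after deleting $j$) is compatible; this is exactly the distinction between the two scenarios in Figure~\ref{figprop2} (a common source reaching two members of $W$ via vertex-disjoint paths, versus a single path between two members of $W$ through $k$). I would handle it by splitting on whether the witnessing $2$-connected structure in $G[X]$ around $W$ is an "ear" attached at two points of $W$ (giving the $P_3$ case) or a structure routed through a node outside $W$ (giving the $P_1, P_2$ case), mirroring the proof of Lemma~\ref{propA2}, and in each case reading off the orientation from the fact that $i$ is reachable from everything in $D[X]$.
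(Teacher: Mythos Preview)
Your proposal has the right shape but contains a genuine gap in the constructive step. You assert that ``there is a path in $D[X]$ from some node of $W$ into $i$ whose interior vertices lie in $X\setminus W$,'' justifying this by biconnectivity of $G[X]$. Biconnectivity, however, is an undirected property and does not guarantee any such \emph{directed} path. A concrete counterexample: take $W=\{i,j\}$ with the single arc $j\to i$, and $X=\{i,j,k\}$ with arcs $j\to i$, $k\to i$, $k\to j$. Then $D[X]$ is a directed sub-block (underlying triangle, $i$ reachable from all), yet there is no directed path from $W$ through $X\setminus W$ back into $W$, because $k$ has no in-arc from $W$. Your construction therefore produces no candidate $k$ at all in this case; the only way to generate $W\cup\{k\}$ here is via the $P_1,P_2$ mechanism (two vertex-disjoint arcs from $k$ into distinct members of $W$), which your primary construction never reaches. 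You acknowledge a case-split at the end, but the split you describe is framed as a verification step \emph{after} $k$ has been chosen, not as an alternative way of \emph{finding} $k$ when the $P_3$-type path does not exist. Relatedly, the step ``removing a single vertex $j$ cannot disconnect $G[X]$, so $k$ still connects to $W\setminus\{j\}$'' gives only undirected connectivity, whereas $f_E(k,W\setminus j,j)=0$ requires a common directed ancestor in $D-j$; undirected connectivity alone does not supply this.

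The paper sidesteps both issues by arguing by contradiction rather than by construction. It starts from the observation that every $k\in X\setminus W$ has a directed path to $i$ in $D[X]$ and then classifies, for each such $k$, the possible configurations of its connection to $W$; each configuration either exhibits $P_3$, or exhibits $P_1,P_2$, or forces $G[X]$ to fail biconnectivity, or produces a node from which $i$ is unreachable in $D[X]$. The last two contradict the directed sub-block hypothesis, so some $k$ satisfying one of the two generating scenarios must exist. This global case exhaustion is what your argument is missing: rather than trying to manufacture a single $k$ directly and then patch the non-partitioning condition, you should argue that if \emph{no} $k\in X\setminus W$ fits either scenario, then $D[X]$ cannot be a directed sub-block.
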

\begin{proof}
From Figure~\ref{figprop2} but with $k\in X$ we note that some set $W \cup \{k \}$, where $k \in X \setminus W$, can be generated from $W$ if and only if there exist two vertex disjoint paths $P_1,P_2$ from some individual to two different members of $W$ and where $k$ is the penultimate individual in one of these paths, or there exists a path $P_3$ from one member of $W$ to a different member of $W$ and $k$ is the penultimate individual in this path. Our proof is by contradiction. We shall assume that neither of these scenarios hold and show that this contradicts the assumption that $D[X]$ is a directed sub-block. 


 Every individual in $X \setminus W$ is at the start of a path to $i$ in $D[X]$. Figure~\ref{figprop3} shows the different ways in which an individual $k \in X \setminus W $ may be connected to an individual of $W$ in D[X]. Firstly, the underlying graph in Figure~\ref{figprop3}a is not biconnected so here $D[X]$ is not a directed sub-block. Secondly, Figures~\ref{figprop3}b and~c correspond to the existence of path $P3$ and the existence of paths $P1,P2$ respectively and hence $W \cup \{k \}$ is generated. Finally, Figure~\ref{figprop3}d has an individual from which $W$ cannot be reached and so $D[X]$ is not a directed sub-block. Other more complicated variants of this path will also contain such individuals from which $W$ cannot be reached. Hence, if paths $P1$ and $P2$ do not exist, and path $P3$ does not exist, then $D[X]$ is not a directed sub-block.

\begin{figure}
   \centerline{\includegraphics[width=.9\textwidth]{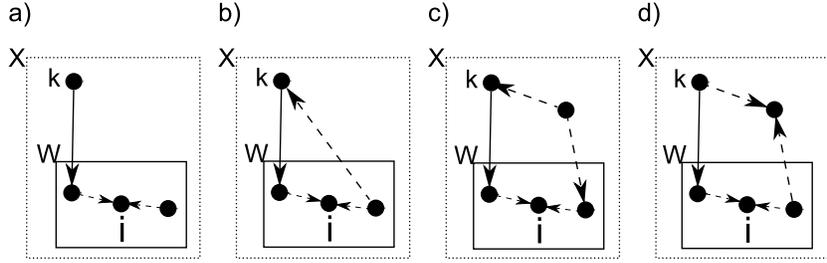}}
  \caption{Demonstration for Lemma~\ref{propA3}: shows the ways in which $k\in X \setminus W$ can be connected to $W$. We have cases a) the underlying graph of $D[X]$ is not biconnected, b) Existence of path $P3$, c) Existence of paths $P1$ and $P2$ and d) Existence of a node from which $W$ cannot be reached. Cases a) and d) are not directed sub-blocks so the existence of paths $P1$ and $P2$, or path $P3$, is established.}
 \label{figprop3}
\end{figure}
\end{proof}

\section{Proof of general results on cycle-partitioning}
The main results of this appendix are stated as Theorem~\ref{theoremB1} and Theorem~\ref{theoremB2}.
\begin{lemma}
\label{prop2.1}
Any induced subsystem state $A:W \to \{S,I \}$ belonging to $M_{C(x)}$ consists of a set of individuals $W \subset V$ where there is at least one individual reachable from all others in $D[W]$.
\end{lemma}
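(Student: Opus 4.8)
The plan is to mimic the inductive construction of $M_E$ from Appendix A, but now tracking what happens when $f_{C(x)}$ rather than $f_E$ controls whether a new node is appended. The key observation is that the rule generating $M_{C(x)}$ is structurally identical to the rule generating $M_E$: starting from connected pairs $\{i,j\}$ (with $j$ connected towards $i$, so $i$ is reachable from all others in $D[\{i,j\}]$), a node $k$ is appended to $W$ only when there is an arc from $k$ to some $j\in W$ for which $k$ is \emph{not} cycle-partitioning of order $x$ with respect to $j$ and $W\setminus j$. Since cycle-partitioning at order $x$ fails only if it already fails at some lower effective path budget, and in particular the ``not cycle-partitioning'' condition still forces the existence of either two vertex-disjoint paths to two members of $W$ with $k$ penultimate on one of them, or a single path between two members of $W$ with $k$ penultimate — exactly the dichotomy exploited in Lemma~\ref{propA1} and Lemma~\ref{propA2} — the same structural conclusions go through.

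First I would set up the induction on $|W|$ and state the base case $|W|=2$, which holds trivially: a connected pair $\{i,j\}$ with an arc $(j,i)$ has $i$ reachable from all others. Next, for the inductive step, I would assume the claim for all induced states of size $n$ and consider an induced state on $W_{n+1}=W_n\cup\{k\}$ obtained by appending $k$ to an induced state on $W_n$. By the induction hypothesis there is some $i\in W_n$ reachable from all others in $D[W_n]$. The appending rule gives an arc from $k$ into $W_n$, and the failure of $f_{C(x)}(k, W_n\setminus j, j)$ (where $j$ is the target of that arc) produces, by unpacking the definition of $\mathrm{IN}_a\cap\mathrm{IN}_b$, a directed path witnessing connectivity; combined with the arc from $k$, this shows $k$ too can reach $i$ through $D[W_{n+1}]$ (the path lies inside the relevant node set because the partitioning test is run in $D-j$ restricted appropriately, and the witnessing path together with $W_n$ stays within $W_{n+1}$). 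Hence $i$ remains reachable from all others in $D[W_{n+1}]$, closing the induction.

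The main obstacle I anticipate is the bookkeeping needed to guarantee that the path produced by the failure of cycle-partitioning actually lies \emph{within} $D[W_{n+1}]$ rather than using auxiliary vertices outside $W_{n+1}$ — this is the analogue of the subtlety handled in Figure~\ref{figprop2} and Figure~\ref{figprop3}, where one must argue that if the connecting path left $W_{n+1}$ then either the underlying graph would fail to be biconnected or some node could not reach $W$. For $M_{C(x)}$ the cleanest route is to observe that the appending rule only ever adds the \emph{penultimate} node of such a path, so at the stage $k$ is added it is adjacent (via a single arc) to $W_n$; we do not need the whole witnessing path to lie in $W_{n+1}$, only the single arc from $k$ into $W_n$, and that arc together with $i$'s reachability in $D[W_n]$ already suffices. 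This is in fact simpler than the $M_E$ case because we are only asserting reachability of a single vertex $i$, not the existence of a containing directed sub-block, so no super-set $X\supseteq W$ needs to be produced and the path-tracking of Lemma~\ref{propA2} is not required in full. I would therefore present the argument as a short induction invoking only the appending rule and the definition of reachability.
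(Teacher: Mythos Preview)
Your proposal is correct and follows the same idea as the paper --- an induction on the construction of $M_{C(x)}$ via equation~\ref{part_eq} --- and indeed the paper's own proof is just the one-line remark that the claim ``follows from the way in which $M_{C(x)}$ is constructed''.

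That said, you are making the argument harder than it needs to be. The failure of $f_{C(x)}$, the vertex-disjoint-path dichotomy of Lemmas~\ref{propA1}--\ref{propA2}, and the directed-sub-block machinery of Appendix~A are all irrelevant here. Whenever equation~\ref{part_eq} (or~\ref{part_eq2}) produces a larger state on $W\cup\{n\}$ from one on $W$, it does so through a term with $T_{kn}>0$ for some $k\in W$; this already gives an arc $(n,k)$ in $D[W\cup\{n\}]$, \emph{regardless} of whether the partitioning test passes or fails. Combined with the inductive hypothesis that some $i^*\in W$ is reachable from every node of $W$ in $D[W]$, you get $n\to k\to i^*$ inside $D[W\cup\{n\}]$, and the induction closes. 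You eventually arrive at exactly this in your last paragraph, so your proof stands; but the earlier invocation of cycle-partitioning structure and the worry about paths escaping $W_{n+1}$ should simply be deleted, since no witnessing path beyond the single arc $(n,k)$ is ever needed. (Minor slip: in your phrasing it is the in-$W$ node, not the newly appended node, that is tested for partitioning --- but as just noted this plays no role in the argument.)
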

\begin{proof}
Follows from the way in which $M_{C(x)}$ is constructed via equation \ref{part_eq} (or equation \ref{part_eq2}).
\end{proof}
\begin{theorem}
If the largest transmission block in a network consists of $n$ individuals, then any cycle-partitioning model of order $x \ge n-2$ is exact. 
\label{theoremB1}
\end{theorem}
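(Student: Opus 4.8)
The plan is to show that once $x \ge n-2$ the recursive construction producing the cycle-partitioning state set $M_{C(x)}$ from equation~\ref{part_eq} (with $f_p = f_{C(x)}$) coincides, step for step, with the construction producing the exact set $M_E$ from equation~\ref{part_eq1}; the two resulting systems of differential equations are then literally identical, so exactness of the cycle-partitioning model follows from Theorem~\ref{theorem1} and Corollary~\ref{partitioning_theorem}. One inclusion is immediate: if $f_E(X,Y,i)=1$ then $\mathrm{IN}(X)\cap\mathrm{IN}(Y)=\emptyset$ in $D-i$, hence $\mathrm{IN}_a(X)\cap\mathrm{IN}_b(Y)=\emptyset$ for all $a,b$, i.e.\ $f_{C(x)}(X,Y,i)=1$ for every $x$; thus the cycle-partitioning construction closes at least wherever the exact one closes. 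Everything reduces to the converse, on the states that actually appear.

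I would run the argument as an induction on the number of build steps, the hypothesis being that the two constructions have so far generated the same subsystem states. For a common state $\psi^A_W$, a vertex $k \in W$ with $A_k = S$, and an external vertex $v$ carrying an arc $(v,k)$, the step reduces --- given the easy inclusion above --- to the claim that if $W \cup \{v\} \in M_E$, $|W| \ge 2$ and $x \ge n-2$, then $f_E(v, W\setminus k, k) = 0$ forces $f_{C(x)}(v, W\setminus k, k) = 0$. (When $|W| = 1$ the set $W\setminus k$ is empty, both functions are undefined, and both constructions simply adjoin the connected pair, so there is nothing to check.)

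To prove this claim I would exhibit an explicit short common ancestor of $v$ and $W\setminus k$ in $D-k$. The hypothesis $f_E(v,W\setminus k,k)=0$ says exactly that $W\cup\{v\}$ is generated from $W$ in the $M_E$ construction. The constructions in the proofs of Lemma~\ref{propA1} (case $|W|=2$) and Lemma~\ref{propA2} (case $|W|\ge 3$) then show that this generation step is witnessed either (i) by two paths $P_1,P_2$ from a common vertex $u$ to two distinct vertices of $W$, meeting only in $u$, with $v$ the penultimate vertex of $P_1$ --- so $P_1$ ends at $k$ --- or (ii) by a single path $P_3$ between two distinct vertices of $W$ with $v$ penultimate --- so $P_3$ also ends at $k$; and, moreover, that $W\cup\{v\}$ together with the vertices of these paths lies inside a directed sub-block, hence (every directed sub-block being contained in a maximal one) inside a transmission block --- a set of at most $n$ vertices. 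In case (i), deleting $k$ leaves the initial segment $u\to\cdots\to v$ of $P_1$, of length $|P_1|-1$, together with all of $P_2$, of length $|P_2|$ (which avoids $k$, since $k\in V(P_1)$ while $P_1$ meets $P_2$ only at $u\neq k$); so in $D-k$ the vertex $u$ reaches $v$ in $|P_1|-1$ arcs and reaches $W\setminus k$ in $|P_2|$ arcs, and because $V(P_1)$ and $V(P_2)$ overlap only in $u$ and together lie in a set of at most $n$ vertices, $|P_1|+|P_2|+1\le n$, whence $(|P_1|-1)+|P_2|\le n-2\le x$. In case (ii) I take the common ancestor to be the starting vertex of $P_3$, which already lies in $W\setminus k$; deleting $k$ leaves a path from it to $v$ of length $|P_3|-1\le n-2\le x$. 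In both cases, since $\mathrm{IN}_a$ is cumulative, one can pad these lengths to a pair $a,b$ with $a+b=x$, obtaining $\mathrm{IN}_a(v)\cap\mathrm{IN}_b(W\setminus k)\neq\emptyset$ in $D-k$, i.e.\ $f_{C(x)}(v,W\setminus k,k)=0$. The degenerate case of a network with no transmission block is separate: there $M_E$ is the pair-level system of Example~1 (equation~\ref{pair_level}), which is exact and equals the $x=0$ --- hence every --- cycle-partitioning model.

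The step I expect to be the real obstacle is the geometric one: from $f_E(v,W\setminus k,k)=0$ alone the witness $u$ and its two paths could in principle wander arbitrarily far through $D$, so no length bound is visible. What rescues it --- and the reason one must use the Appendix~A generation lemmas and not merely Corollary~\ref{theoremA2} --- is that these witnessing paths can be chosen so that they, together with $W\cup\{v\}$, still lie inside a directed sub-block; the size $n$ of the largest transmission block then caps the combined vertex count of the paths, hence the sum of their lengths, and the ``$-2$'' simply records the deleted vertex $k$. Everything else --- the easy inclusion, the induction identifying the two constructions, and the $\mathrm{IN}_a$-padding that upgrades ``$\le n-2$'' to ``$=x$'' for every $x\ge n-2$ --- is routine.
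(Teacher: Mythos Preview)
Your argument is correct. Both you and the paper reduce the theorem to the implication ``$f_E(v,W\setminus k,k)=0 \Rightarrow f_{C(x)}(v,W\setminus k,k)=0$ for $x\ge n-2$'' at every step of the recursive construction, and both derive this from the same structural fact: the common-ancestor paths witnessing $f_E=0$ can be taken to sit, together with $k$ and part of $W$, inside a directed sub-block. Where the two differ is in how that fact is reached. The paper works in the contrapositive and invokes only Lemma~\ref{prop2.1} (that every $W$ arising in $M_{C(x)}$ has a vertex reachable from all others in $D[W]$), asserting in one sentence that long witnessing paths would force a directed sub-block of more than $n$ vertices. You instead run an induction along the $M_E$ construction and invoke Lemmas~\ref{propA1}/\ref{propA2}: these already package the biconnectedness and reachability checks needed to certify that the witness paths lie in a sub-block, so your path-length bound $|P_1|+|P_2|+1\le n$ drops out cleanly.

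The trade-off is this: the paper's route is shorter and nominally more general (it applies to any $W$ with a reachable vertex, not only $W\in M_E$), but it leaves the reader to verify why the paths plus $\{i,j\}$ plus enough of $W$ actually form a directed sub-block. Your route is longer but makes that geometric step explicit by piggy-backing on Appendix~A, at the cost of needing the induction hypothesis $W\in M_E$. One small point worth tightening in your write-up: the inference ``$v$ penultimate in $P_1$, hence $P_1$ ends at $k$'' is not automatic from the statements of Lemmas~\ref{propA1}/\ref{propA2} alone --- it holds because you are constructing the paths from the \emph{specific} failure at $k$ (the common ancestor lives in $D-k$, and you append the arc $(v,k)$). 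Saying this in one line removes any ambiguity.
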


\begin{proof}
For any $W \subset V$ where at least one individual is reachable from all others in $D[W]$, if any $i \in W$ is cycle-partitioning at order $x \ge n-2$ with respect to some $j \notin W$ and $W \setminus i$, where $(j,i)$ is an arc, then $i$ is also dynamically partitioning with respect to $j$ and $W \setminus i$. This follows because if $i$ is not dynamically partitioning, but is cycle-partitioning at order $x>n-2$, then this implies the existence of a directed sub-block containing $j$, $i$ and at least one other member of $W$, and which consists of more than $n$ individuals. Therefore, by Lemma~\ref{prop2.1}, $M_{C(x)}$ only utilises genuine dynamical partitioning and we have $M_{C(x)}=M_E$ for $x \ge n-2$.  
 
\end{proof}

\begin{theorem}
If the smallest directed sub-block consists of $n$ individuals, then all cycle-partitioning models of order $x<n-2$ are equivalent to the pair-level models.
\label{theoremB2}
\end{theorem}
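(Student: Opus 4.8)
The plan is to show that when $x<n-2$ no subsystem state on more than two nodes can enter $M_{C(x)}$, so that the order-$x$ cycle-partitioning model coincides, state for state and closure for closure, with the $x=0$ model (which is the pair-level model). As in Definition~\ref{defnA1} and the remark after it, a subsystem state $A:W\to\{S,I\}$ with $|W|>2$ lies in $M_{C(x)}$ if and only if $W$ can be generated from some connected pair by the rule $W_{m+1}=W_m\cup\{c\}$, where $c\in V\setminus W_m$ has an arc towards some $b\in W_m$ that is \emph{not} cycle-partitioning at order $x$ relative to $c$ and $W_m\setminus\{b\}$; this is immediate from the construction of $M_{C(x)}$ via equation~\ref{part_eq} (or equation~\ref{part_eq2}). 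Since a generating sequence $W_2\to W_3\to\cdots$ produces a three-element set at its first step, it suffices to prove the key claim: \emph{if a three-element set is generated from a connected pair under this rule, then $D$ has a directed sub-block on at most $x+2$ vertices.}

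To prove the key claim, write the connected pair as $\{a,b\}$ and the new node as $c$, with an arc $c\to b$; the rule gives $f_{C(x)}(c,\{a\},b)=0$, so there are $p,q\in\mathbbm{N}$ with $p+q=x$ and a vertex $u$ in $\mathrm{IN}_p(c)\cap\mathrm{IN}_q(a)$ in $D-b$. Pick shortest directed paths $Q_c$ from $u$ to $c$ and $Q_a$ from $u$ to $a$ (both avoiding $b$), let $v$ be the vertex of $V(Q_c)\cap V(Q_a)$ farthest along $Q_c$, and let $Q_c',Q_a'$ be the sub-paths from $v$ to $c$ and from $v$ to $a$; then $Q_c'$ and $Q_a'$ meet only at $v$ and have total length at most $p+q=x$. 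Closing up $Q_c'$, the arc $c\to b$, the edge of the pair between $a$ and $b$, and $Q_a'$ gives a simple cycle $C$ with $3\le|V(C)|\le x+2$. Every vertex of $Q_c'$ reaches $c$ and hence $b$, and every vertex of $Q_a'$ reaches $a$, so, depending on the direction of the pair edge, either $b$ (if it is oriented $a\to b$) or $a$ (if $b\to a$) is reachable from every other vertex of $C$; this is the same orientation bookkeeping as in the proof of Lemma~\ref{propA1}. Hence $D[V(C)]$ has a biconnected underlying graph (it contains the spanning cycle $C$) and a universally reachable vertex, so it is a directed sub-block on at most $x+2$ vertices, proving the key claim.

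To finish: if the smallest directed sub-block of $D$ has $n$ vertices and $x<n-2$, then $x+2\le n-1$, so $D$ has no directed sub-block on at most $x+2$ vertices; by the key claim no three-element set is generated, and hence $M_{C(x)}$ contains only the singlets $S_i,I_i$ and the connected pairs $S_iI_j$ (those with $T_{ij}>0$). For each such pair, every term of equation~\ref{part_eq} (or~\ref{part_eq2}) that would otherwise produce a triple is therefore closed --- the relevant $f_{C(x)}(n,W\setminus k,k)$ is $1$ --- and the closed system one obtains is exactly the $x=0$ model: the pair-level model~(\ref{pair_level}) if equation~\ref{part_eq} is used, and its variant if equation~\ref{part_eq2} is used. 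Thus every cycle-partitioning model of order $x<n-2$ is the pair-level model.

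I expect the delicate point to be the key claim, and within it the verification that the extracted cycle $C$ really is (the vertex set of) a directed sub-block rather than merely a biconnected subgraph --- $C$ need not be a directed cycle, so one needs the direction-of-edge case distinction, which is exactly the analysis already made in the proof of Lemma~\ref{propA1} (the panels of Figure~\ref{drawing}). The genuinely new ingredient over Appendix~A is just the length accounting via the farthest common vertex $v$, which is what caps $|V(C)|$ by $x+2$; the reduction to three-element sets and the recognition of the resulting system as the pair-level model are routine.
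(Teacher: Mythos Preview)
Your proof is correct and follows the same strategy as the paper's: both argue that failure of cycle-partitioning at order $x$ for a connected pair forces the existence of a directed sub-block on fewer than $n$ vertices, hence no triple is ever generated and $M_{C(x)}=M_{C(0)}$. The paper's proof merely asserts this implication in one sentence, whereas you carry out the construction explicitly (extracting the internally disjoint sub-paths from the farthest common vertex $v$, closing them into a spanning cycle $C$ with $|V(C)|\le x+2$, and doing the orientation case-split to exhibit a universally reachable vertex in $D[V(C)]$); this is exactly the detail the paper suppresses, and your appeal to the case analysis of Lemma~\ref{propA1} is apt.
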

\begin{proof}

For any connected pair $W \subset V$ ($|W|=2$), if $i \in W$ is not cycle-partitioning at order $x < n-2$ with respect to $j \notin W$ and $W \setminus i$, where $(j,i)$ is an arc, then there exists a directed sub-block containing $W \cup j$, and which consists of less than $n$ individuals. Therefore, no such $j$ can exist. From the way in which $M_{C(x)}$ is constructed, this means that no subsystem states larger than connected pairs emerge and we have the pair-level model, i.e. $M_{C(x)}=M_{C(0)}$ for $x < n-2$.

\end{proof}

\begin{remark}
Together, Theorems~\ref{theoremB1} and~\ref{theoremB2} imply that the difference in size between the largest directed sub-block (or largest transmission block) and smallest directed sub-block gives an upper bound on the number of distinct models that the cycle-partitioning approach can provide. If all directed sub-blocks are the same size then no models that are distinct from the pair-level model and the exact dynamical partitioning model emerge. However, even when this difference is large the number of distinct models may sometimes be small, as was shown to be the case for the triangle lattice (where the difference is $N-3$). 
\end{remark}


\begin{thebibliography}{99}

\bibitem{Newman03}
\textsc{Newman, M.E.J.} (2003).
The structure and function of complex networks. \textit{SIAM Rev.} \textbf{45}, 167-256.

\bibitem{Matsuda}
\textsc{Matsuda, H.N.} et al. (1992).
Statistical Mechanics of Population: The Lattice Lotka-Volterra Model. \textit{Prog. Theor. Phys.} \textbf{88} 1035-1049.

\bibitem{Sato} 
\textsc{Sato, K.} et al. (1994).
Pathogen invasion and host extinction in lattice structured populations. \textit{J. Math. Biol.} \textbf{32} 251-268.

\bibitem{Harada}
\textsc{Harada, Y.} and \textsc{Iwasa, Y.} (1994).
Lattice Population Dynamics for Plants with Dispersing Seeds and Vegetative Propagation. \textit{Res. Popul. Ecol.} \textbf{36} 237-249.

\bibitem{Keeling99}
\textsc{Keeling, M.J.} (1999).
The effects of local spatial structure on epidemiological invasions. \textit{Proc. Biol. Sci.} \textbf{266} 859-67.

\bibitem{Bauch}
\textsc{Bauch, C.} (2005).
The spread of infectious diseases in spatially structured populations: an invasory pair approximation. \textit{Math. Biosci.} \textbf{198} 217-237.

\bibitem{House}
\textsc{House, T.} et al. (2009).
A motif-based approach to network epidemics. \textit{B. Math. Biol.} \textbf{71} 1693-1706.

\bibitem{Sharkey08}
\textsc{Sharkey, K.J.} (2008).
Deterministic epidemiological models at the individual level. \textit{J. Math. Biol.} \textbf{57} 311-331.

\bibitem{Sharkey11}
\textsc{Sharkey, K.J.} (2011).
Deterministic epidemic models on contact networks: Correlations and unbiological terms. \textit{Theor. Popul. Biol.} \textbf{79} 115-129.

\bibitem{Simpson11}
\textsc{Simpson, M.J.} and \textsc{Baker, R.E.} (2011).
Correcting mean-field models for spatially dependent advection-diffusion-reaction phenomena. \textit{Phys. Rev. E} \textbf{83} 051922.

\bibitem{Sharkey13}
\textsc{Sharkey, K.J.} et al. (2013).
Exact equations for SIR epidemics on tree graphs. \textit{Bulletin Mathem. Biol.} DOI: 10.1007/s11538-013-9923-5.

\bibitem{Kiss}
\textsc{Kiss, I.Z.} et. al. (2015).
Exact deterministic representation of Markovian SIR epidemics on networks with and without loops. \textit{J. Math. Biol.} \textbf{70} 437-464.

\bibitem{Wilkinson} 
\textsc{Wilkinson, R.R.} and \textsc{Sharkey, K.J.} (2014).
Message passing and moment closure for susceptible-infected-recovered epidemics on finite networks. \textit{Phys. Rev. E} \textbf{89}, 022808.

\bibitem{Newman10}
\textsc{Newman, M.E.J.} (2010).
\textit{Networks, an Introduction}. Oxford University Press, Oxford, UK. 

\bibitem{Ball}
\textsc{Ball, F., Sirl, D.} and \textsc{Trapman, P.} (2010).
Analysis of a stochastic SIR epidemic on a random network incorporating household structure. \textit{Math. Biosci.} \textbf{224} 53-73.

\bibitem{KarrerNewman}
\textsc{Karrer, B.} and \textsc{Newman, M.E.J.} (2010).
A message passing approach for general epidemic models. \textit{Phys. Rev. E.} \textbf{82} 016101.

\end{thebibliography}
\end{document}